\newtheorem{thm}{Theorem}[section]
\newtheorem{lem}[thm]{Lemma}
\newtheorem{prop}[thm]{Proposition}
\newtheorem{cor}[thm]{Corollary}
\newtheorem{assumption}{Assumption}
\newenvironment{Assumption}
{\begin{mdframed}\begin{assumption}}
		{\end{assumption}\end{mdframed}}
\def\u{\mathbf{u}}
\def\v{\mathbf{v}}
\def\0{\mathbf{0}}
\def\bZ{\mathbf{Z}}
\def\bz{\mathbf{z}}
\def\1{\mathbf{1}}
\def\E{\mathbb{E}}
\def\P{\mathbb{P}}
\def\R{\mathbb{R}}
\def\Z{\mathbb{Z}}
\def\cR{\mathcal{R}}
\def\cG{\mathcal{G}}
\def\G{\Gamma}
\def\Palm{\P^\0}
\def\Palmexp{\E^\0}
\def\MarkPalm{\E^{(\0,\1)}}
\def\ProbMark{\P^{(\0,\1)}}
\def\Ball{B_1(\0)}
\theoremstyle{definition}
\newtheorem{defn}{Definition}[section]
\newcommand{\remove}[1]{}
\begin{document}
	
	\title{An Analysis of Probabilistic Forwarding \\ of Coded Packets on Random Geometric Graphs}

	%%% Several authors with up to three affiliations:
	\author{%
		B.~R.~Vinay Kumar,~\IEEEmembership{Student Member,~IEEE},
		Navin Kashyap,~\IEEEmembership{Senior Member,~IEEE},
		and D.~Yogeshwaran%
		\thanks{This work was presented in part at the 19th International Symposium on Modeling and Optimization in Mobile, Ad hoc, and Wireless Networks (WiOpt 2021), which was held virtually during Oct 18--21, 2021.}
		\thanks{B.~.R.~Vinay Kumar and N.~Kashyap are with the Department of Electrical Communication Engineering, 
			Indian Institute of Science, Bengaluru, India.
			Email: \{vinaykb, nkashyap\}@iisc.ac.in}
		\thanks{D.~Yogeshwaran is with the Indian Statistical Institute, Bengaluru, India. Email: {d.yogesh@isibang.ac.in}}
		\thanks{The work of B.R.\ Vinay Kumar was supported in part by a fellowship from the Centre for Networked Intelligence (a Cisco CSR initiative) of the Indian Institute of Science. The work of N.\ Kashyap was supported in part by a Swarnajayanti Fellowship awarded by the Dept.\ of Science \& Technology, Govt.\ of India. The work of D.\ Yogeshwaran was supported in part by SERB-MATRICS grant MTR/2020/000470.}
		}

		\maketitle
		
		\begin{abstract}
			We consider the problem of energy-efficient broadcasting on large ad-hoc networks. Ad-hoc networks are generally modeled using random geometric graphs (RGGs). Here, nodes are deployed uniformly in a square area around the origin, and any two nodes which are within Euclidean distance of $1$ are assumed to be able to receive each other's broadcast. A source node at the origin encodes $k$ data packets of information into $n\  (>k)$ coded packets and transmits them to all its one-hop neighbors. The encoding is such that, any node that receives at least $k$ out of the $n$ coded packets can retrieve the original $k$ data packets. Every other node in the network follows a probabilistic forwarding protocol; upon reception of a previously unreceived packet, the node forwards it with probability $p$ and does nothing with probability $1-p$. We are interested in the minimum forwarding probability which ensures that a large fraction of nodes can decode the information from the source. We deem this a \emph{near-broadcast}. The performance metric of interest is the expected total number of transmissions at this minimum forwarding probability, where the expectation is over both the forwarding protocol as well as the realization of the RGG. In comparison to probabilistic forwarding with no coding, our treatment of the problem indicates that, with a judicious choice of $n$, it is possible to reduce the expected total number of transmissions while ensuring a near-broadcast.
		\end{abstract}
		\renewcommand{\thefootnote}{\arabic{footnote}}
		\setcounter{footnote}{0}
		\section{Introduction}
		Ad-hoc networks are distributed networks with no centralized infrastructure. Applications involving the Internet of Things (IoT), such as healthcare, smart factories and homes, intelligent transport etc., have lead to wide-spread presence of dense ad-hoc networks. Individual nodes in these networks are typically low-cost and energy-constrained, having limited computational ability and knowledge of the network topology.
		
		Random network models have found wide acceptance in modeling wireless ad-hoc networks. In particular, random geometric graphs (RGGs) have been used in the literature to model spatially distributed networks (see e.g. \cite{vaze2015random} and \cite{franceschetti2008random}). These are generated by scattering (a Poisson number of) nodes in a finite area uniformly at random and connecting nodes within a pre-specified distance. The random distribution of nodes captures the variability in the deployment of the nodes of an ad-hoc network. The distance threshold conforms to the maximum range at which a transmission from a node, with maximum power, is received reliably. A more formal description of our network setting is provided in the next section.
		
		Exchange of network-critical information for network control and routing happens primarily through broadcast mechanisms in these networks. A considerable number of broadcast mechanisms have been proposed in the literature (see e.g. \cite{williams2002comparison}, \cite{fragouli2008efficient} and \cite{wang2013coding}, and the references therein). Algorithms such as flooding, although being light-weight and easy to implement, give rise to unnecessary transmissions and hence are not energy efficient. Flooding is also known to result in the `broadcast-storm' problem (see \cite{tseng2002broadcast}).
		
		Probabilistic forwarding as a broadcast mechanism (see e.g., \cite{forero2019thesis}, \cite{sasson2003probabilistic}, \cite{haas2006gossip}) has been proposed in the literature as an alternative to flooding. Here, each node, on receiving a packet for the first time, either forwards it to all its one-hop neighbours with probability $p$ or takes no action with probability $1-p$. While this mechanism reduces the number of transmissions, reception of a packet by a network node is not guaranteed.
		
		To improve the chances of a network node receiving a packet and to handle packet drops, we introduce coding along with probabilistic forwarding. Let us suppose that the source possesses $k_s$ message packets which need to be broadcast. These $k_s$ message packets are first encoded into $n$ coded packets such that, for some $k \ge k_s$, the reception of any $k$ out of the $n$ coded packets by a node, suffices to retrieve the original $k_s$ message packets. Examples of codes with this property are Maximum Distance Separable (MDS) codes ($k=k_s$), fountain codes ($k=k_s(1+\epsilon)$ for some $\epsilon >0$) etc. which are used in practice.
		
		The $n$ coded packets are indexed using integers from $1$ to $n$, and the source transmits each packet to all its one-hop neighbours. Every other node in the network, upon reception of a packet (say packet $\# j$) uses the probabilistic forwarding mechanism described above. The node ignores all subsequent receptions of packet $\# j$. Packet collisions and interference effects are neglected.
		
		In this paper, we analyze the performance of the above algorithm on RGGs. In particular, we wish to find the minimum retransmission probability $p$ for which the expected fraction of nodes receiving at least $k$ out of the $n$ coded packets is close to 1, which we deem a ``near-broadcast''. Here, it is to be clarified that the expectation is over both the realization of the RGG and the probabilistic forwarding protocol. This probability yields the minimum value for the expected total number of transmissions across all the network nodes needed for a near-broadcast. The expected total number of transmissions is taken to be a measure of the energy expenditure in the network. 
		
		In our previous work \cite{ToN}, we have analyzed the probabilistic forwarding mechanism described here on deterministic graphs such as trees and grids. It was found that, introducing coded packets with probabilistic forwarding, offered significant energy benefits in terms of the number of transmissions needed for a near-broadcast on well-connected graphs such as grids and other lattice structures. However, for $d$-regular trees, such energy savings were not observed. RGGs (in the super-critical regime) show similar behaviour as grids, i.e., for an intelligently chosen value of the number of coded packets, $n$, and the minimum forwarding probability, the energy expenditure in the network is considerably lesser for a near-broadcast, when compared to the scenario of probabilistic forwarding with no coding.
		
		In this paper, we justify these observations using rigorous methods. While the techniques used here are similar to the ones on the grid (in \cite{ToN}), the additional complications due to the randomness of the underlying graph need to be addressed. This calls for the use of  ideas from continuum percolation, ergodic theory and Palm theory to circumvent some of the technicalities encountered. These mathematical techniques could be of independent interest for related problems. Moreover, our method of analysis may also extend to more general broadcasting models and other point processes.
		
		%The mathematical techniques should also allow generalization to more general mathematical models and other point processes.}
		
		The rest of the paper is organized as follows. In Section \ref{sec:formulation}, we describe our network setup and formulate our problem. Section \ref{sec:simu} provides the simulation results of the probabilistic forwarding algorithm on RGGs. In Section \ref{sec:prelims}, we provide definitions and notations of RGGs on $\R^2$. Marked point processes (MPPs) are introduced to model probabilistic forwarding on the RGG. Section \ref{sec:MPP_fwding} relates probabilistic forwarding and marked point processes. Ergodic theorems on MPPs are used to obtain some key quantities. These will serve as the main ingredients in obtaining our estimates for the minimum forwarding probability and the expected total number of transmissions which are presented in Section \ref{sec:results}. Since the estimates for the minimum forwarding probability are not computable, in Section \ref{sec:heuristic}, we provide a heuristic approach which is used to compare with the simulation results. Section \ref{sec:disc} discusses some aspects related to the assumptions and our results. The appendix contains technical results pertaining to the Palm expectations and the proof of one of our main theorems.
		
		\section{Problem formulation}\label{sec:formulation}
		We begin by describing our setting for the specific case of random geometric graphs. This introduces additional notation specific to RGGs as well.
		\subsection{Network setup} \label{sec:setup}
		A random geometric graph is parametrized by the intensity $\lambda$ and the distance threshold $r$. It suffices to study them by keeping one of the parameters fixed. In our treatment, we will fix the distance parameter $r$ to be equal to $1$, and study various properties as a function of the intensity, $\lambda$.
		
		Construct a random geometric graph $G_m$ with intensity $\lambda$ and distance threshold $r = 1$ on $\G_m := \left[\frac{-m}{2}, \frac{m}{2}\right]^2$ as follows: 
		\begin{itemize}
			\item \textbf{Step 1:} Sample the number of points, $N$, from a Poisson distribution with mean $\lambda \nu(\G_m)$. Here, $\nu(\cdot)$ is the Lebesgue measure on $\R^2$. Therefore, $N \sim \mathrm{Poi}(\lambda m^2)$. 
			\item \textbf{Step 2:} Choose points $X_1,X_2, \cdots, X_N$ uniformly and independently from $\G_m$. These form the points of a Poisson point process (see \cite[Section 2.5]{chiu2013stochastic}) $\Phi$, and constitute the vertex set of $G_m$. 
			%It is often easier to think of a point process as a counting measure $\Phi := \sum_i \varepsilon_{X_i}$, where $\varepsilon_x$ is the \emph{Dirac} measure; for $A\subset \R^2$, $\varepsilon_x(A)=1$ if $x\in A$ and $\varepsilon_x(A)=0$ if $x\notin A$. Consequently, $\Phi(A)$ gives the number of points within $A$.
			\item \textbf{Step 3:} Place an edge between any two vertices which are within Euclidean distance $r=1$ of each other.
		\end{itemize}
		
		To carry out probabilistic forwarding over $G_m$, we need to fix a source. For this, we will assume that there is a point at the origin $\0 =(0,0) \in \R^2$. More specifically, a graph $G_m^\0$ is created with the underlying point process $\Phi^\0 \triangleq \Phi \cup \{\0\}$, as the vertex set and introducing additional edges from $\0$ to nodes which are within $\Ball$, to the edge set of $G_m$. Here, $\Ball$ (more generally, $B_1(\v)$ for $\v \in \R^2$) is a closed Euclidean ball of radius $1$ centered at $\0$ ($\v$).
		
		The inclusion of an additional point at the origin $\0$ means that all the probabilistic computations need to be made with respect to the Palm probability given a point at the origin. We direct the reader to \cite[Ch. 1.4]{baccelli2010stochastic} for an in-depth treatment of Palm theory. Heuristically, the Palm probability must be interpreted as the probability conditional on the event that the origin is a point of the point process. We denote the Palm probability by $\P^\0$ and the expectation with respect to it by $\E^\0$.
		
		The origin here is a distinguished vertex. Broadcasts initiated from it can be received by the nodes which are present in the component of the origin only. Denote by $C_\0 \equiv C_\0(G_m^\0)$, the set of nodes in the component of the origin in $G_m^\0$. The component of the origin in $G_m^\0$ forms the underlying connected graph, which we denote by $G$.
		% A near-broadcast, in this case, is said to occur when the fraction of nodes of $G$ receiving at least $k$ out of the $n$ packets is close to $1$.
		
		%Since we are interested in a broadcast problem, it is natural to ask for the underlying graph to be connected. For a fixed $\lambda$, the probability that the graph, $G^\0$, constructed using the above procedure remains connected diminishes with $m$. However, this is not desirable since our interest is on large, dense networks \footnote{Moreover, our analysis employs ergodic theorems which require $m \rightarrow \infty$.}. 
		%This can be overcome by increasing the intensity $\lambda$ with $m$, which introduces more points in $\G_m$, thus resulting in $G^\0$ to be connected. In fact, a sufficient condition for $G^\0$ to be connected almost surely as $m \rightarrow \infty$ is that $\lambda \geq 10 \log m$. This can be obtained similar to the proof of \cite[Theorem 3.3.4]{franceschetti2008random}, We omit the details here. For our purposes, it suffices if the probability that $G^\0$ is connected within $\G_m$ is strictly positive. 
		
		%In the procedure described above, we consider only those realizations in which the graph $G^\0$ is connected. In particular, if the graph obtained from the above procedure is not connected, we repeat it until we obtain a connected graph, $G$. 

		\subsection{Probabilistic forwarding on RGG}
		\par Equipped with the underlying network, $G$, we now describe the probabilistic forwarding algorithm on it. The source, $\0$, encodes $k_s$ message packets into $n$ coded packets and transmits it to all its one-hop neighbours. Every other node in the network follows the probabilistic forwarding protocol. A node receiving a particular packet for the first time, forwards it to all its one-hop neighbours with probability $p$ and takes no action with probability $1-p$. Each packet is forwarded independently of other packets and other nodes. The node ignores all subsequent receptions of the same packet, irrespective of the decision it took at the time of first reception.
		
		\par We are interested in the following scenario. Let $R_{k,n}(G)$ be the number of nodes in $C_\0$ that receive at least $k$ out of the $n$ coded packets in $G$. We refer to these as \emph{successful receivers}. We sometimes denote this by $R_{k,n}(G_m^\0)$ to explicitly bring out the dependence on $m$. Given a $\delta >0$, we are interested in the minimum forwarding probability $p$, such that the expected fraction of successful receivers is at least $1-\delta$. The expectation here is over the probabilistic forwarding protocol for a fixed realization of $G$. In reality, the proposed broadcasting algorithm of probabilistic forwarding with coded packets, should give a good performance for any realization of the underlying graph. In other words, we would want the expected fraction of successful receivers to be at least $1-\delta$, for every realization of $G$. However, in our formulation we relax this condition by asking for it only in an expected sense. More specifically, we define
		\begin{equation}
		p_{k,n,\delta} = \inf \left\{p \ \ \Big{|} \ \ \mathbb{E} \left[ \frac{R_{k,n}(G_m^\0)}{|C_\0(G_m^\0)|} \right] \geq 1-\delta \right\},
		\label{Eq:pkndelta}
		\end{equation}
		where the expectation is over both the graph $G_m^\0$ as well as the probabilistic forwarding mechanism. Note that, from our construction, $R_{k,n}(G) = R_{k,n}(G_m^\0) \subseteq C_\0(G_m^\0)$. The number of successful receivers is normalized by the total number of vertices in $G$, which is the same as the number of vertices within the component of the origin, $|C_\0(G_m^\0)|$.
		
		The performance measure of interest, denoted by $\tau_{k,n,\delta}$, is the expected total number of transmissions across all nodes when the forwarding probability is set to $p_{k,n,\delta}$. Here, it should be clarified that whenever a node forwards (broadcasts) a packet to all its one-hop neighbours, it is counted as a single (simulcast) transmission. Our aim is to determine, for a given $k$ and $\delta$, how $\tau_{k,n,\delta}$ varies with $n$, and the value of $n$ at which it is minimized (if it is indeed minimized). To this end, it is necessary to first understand the behaviour of $p_{k,n,\delta}$ as a function of $n$. In subsequent sections, we will formulate the probabilistic forwarding mechanism as a marked point process and use results from ergodic theory to obtain the expected value of the number of successful receivers and the overall number of transmissions.
		
		\section{Simulation results}\label{sec:simu}
		Simulations were performed on an RGG generated with $m =101$ and intensity $\lambda =4.5$ and $4$. As stated before, the distance threshold parameter $r$ was set to $1$. The probabilistic forwarding mechanism was carried out with $k=20$ packets and $n$ varying from $20$ to $40$. The value of $\delta$ was set to $0.1$. Twenty realizations of $G$ were generated and $10$ iterations of the probabilistic forwarding mechanism was carried out on each of the realizations. The fraction of successful receivers was averaged over each iteration and realization of the graph.  This was used to find the minimum forwarding probability, $p_{k,n,\delta}$, required for a near-broadcast, which is plotted in Figure \ref{fig:RGG_simu}(a). The $p_{k,n,\delta}$ values so obtained were further used to find the expected total number of transmissions over the same realizations. The expected total number of transmissions $\tau_{k,n,\delta}$, normalized by $\lambda m^2$, which is the average number of points within $\G_m$, is shown in Figure \ref{fig:RGG_simu}(b). This can be interpreted as the average number of transmissions per node in the graph.
		
		Notice that the expected number of transmissions decreases initially to a minimum and then increases. The decrease indicates the benefit of introducing coding along with probabilistic forwarding. The number of coded packets, $n$, and the probability, $p_{k,n,\delta}$, corresponding to the minimum point of Figure \ref{fig:RGG_simu}(b) are the ideal parameters for operating the network to obtain maximum energy benefits.
		\begin{figure} 
			\centering
			\subfloat[Minimum retransmission probability]{%
				\includegraphics[width=\linewidth]{./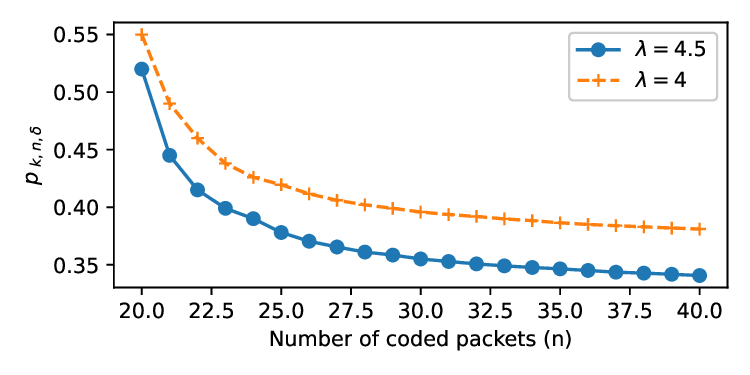}}
			\label{4a}
			\subfloat[Expected total number of transmissions]{%
				\includegraphics[width=\linewidth]{./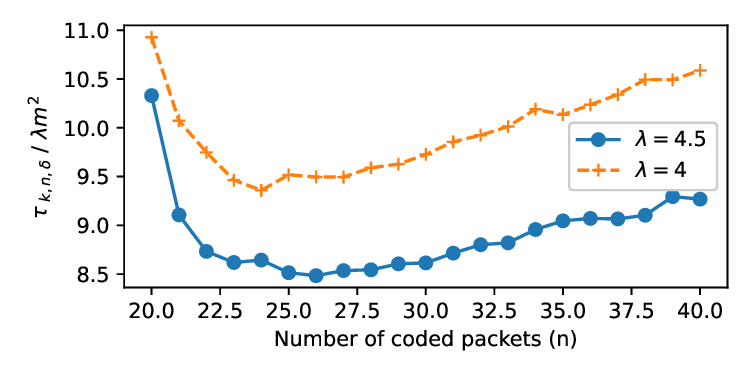}}
			\label{4b}\\
			\caption{Simulations on a random geometric graph generated on $\G_{101}$ with intensity $\lambda$ and distance threshold $r=1$. Probabilistic forwarding done with $k=20$ packets and $\delta=0.1$.}
			\label{fig:RGG_simu} 
			\vspace{-1.5em}
		\end{figure}
		
		Further, it can be observed from Fig. \ref{fig:RGG_simu}(a), that the minimum forwarding probability, $p_{k,n,\delta}$, decreases to $0$ with $n$. This is formalized in the following lemma.
		\begin{lem}
			For fixed values of $k$ and $\delta$, 
			\begin{enumerate}
				\item[\emph{(a)}] $p_{k,n,\delta}$ is a non-increasing function of n.
				\item[\emph{(b)}] $p_{k,n,\delta} \rightarrow 0$ as $n\rightarrow \infty.$
			\end{enumerate}
		\end{lem}
		The proof is on similar lines as that for deterministic graphs expounded in \cite{ToN}. Conditioning on the underlying point process, $\Phi$, gives a deterministic graph, on which the result for deterministic graphs can be used. We omit the details here.
		
		\section{Preliminaries}\label{sec:prelims} 
		In this section, we introduce the tools required to characterize the performance of the probabilistic forwarding algorithm. The probabilistic forwarding mechanism on the RGG is modeled using marked point processes which are described here. 
		
		\subsection{Random geometric graphs on $\R^2$}
		\label{sec:infrgg}
		%A \emph{point process} on $\R^2$ is a random collection of finite or countably infinite points with no accumulation points. It is often easier to think of a point process as a counting measure $\Phi := \sum_i \varepsilon_{X_i}$, where $\varepsilon_x$ is the \emph{Dirac} measure; for $A\subset \R^2$, $\varepsilon_x(A)=1$ if $x\in A$ and $\varepsilon_x(A)=0$ if $x\notin A$. Consequently, $\Phi(A)$ gives the number of points within $A$.
		
		%\begin{defn}
		%	A homogeneous Poisson point process (PPP), $\Phi$, of intensity $\lambda > 0$ is a random set of points in $\R^2$ which satisfy the following conditions:
		%\begin{itemize}%
		%		\item For mutually disjoint regions of $\R^2$, $A_1,A_2,\cdots ,A_r$, the random variables $\Phi(A_1), \Phi(A_2), \cdots, \Phi(A_r)$ denoting the (random) number of points in each of those regions are mutually independent.
		%		\item For any bounded $A \in \mathcal{B}(\R^2)$, $\P(\Phi(A)=l)= \frac{e^{-\lambda \nu(A)} (\lambda \nu(A))^l}{l!}$, where $\nu(A)$ denotes the Lebesgue measure of $A$.
		%	\end{itemize}
		%If $\lambda$ is a (non-constant) function of $x\in \R^2$, then $\Phi$ is an inhomogeneous Poisson point process.
		\label{PPP}
		%\end{defn}

		%\begin{defn}
		%\label{RGG}
		%	A random geometric graph, $RGG(\Phi,r)$, is a graph constructed with an underlying Poisson point process $\Phi$ as the vertex set. The edge set of the graph includes all the edges between any two points of $\Phi$ which are within Euclidean distance $r$ of each other. When the underlying Poisson point process $\Phi$ is homogeneous, we denote the random geometric graph by $RGG(\lambda,r)$. We will also use the notation $\cG \sim RGG(\lambda,r)$ to indicate that the graph $\cG$ is a random geometric graph with intensity $\lambda$ and distance threshold $r$.
		%\end{defn}
		
		%The RGG model was first introduced by Gilbert in 1961 to study wireless networks and is also called the Gilbert disc model. It is known that $RGG(\lambda_1,r_1)$ and $RGG(\lambda_2,r_2)$ share similar connectivity properties if $\lambda_1 r_1^2 = \lambda_2 r_2^2 $ (see \cite[Chapter 7.3]{vaze2015random}). 
		%As in the finite case, we will always take $r=1$, and study different properties as a function of $\lambda$.
		
		Our approach to analyzing the probabilistic forwarding mechanism on $G$ is to relate it to the probabilistic forwarding mechanism on a RGG generated on the whole $\R^2$ plane with the origin as the source. This means that the vertex set of the RGG is a Poisson point process, $\Phi$, on $\R^2$. We refer the reader to  \cite{franceschetti2008random} or \cite{baccelli2010stochastic} for the background needed on Poisson point processes. In particular, we use the procedure outlined in \cite[Section 1.3]{baccelli2010stochastic} to construct the RGG on the whole $\R^2$ plane.
		
		Create a tiling of the $\R^2$ plane with translations of $\G_m$, i.e., $\G_{i,j}:=(im,jm)\  +\  \G_m$ for $i,j \in \Z$. On each such translation, $\G_{i,j}$, construct an independent copy of a Poisson point process with intensity $\lambda$ as described in steps 1 and 2 of Section \ref{sec:setup}. 
		%The point process corresponding to $\G_{0,0}$ is taken to be $\Phi^\0$. 
		The random geometric graph ($\cG$) is constructed by connecting vertices which are within distance $1$ of each other. We then say $\cG \sim RGG(\lambda, 1)$.
		%Call it $\cG$. It can be easily verified that the graph $\cG$ constructed in such a manner, is indeed a random geometric graph satisfying the conditions of \red{Definition \ref{RGG}}, i.e., $\cG\sim RGG(\lambda,1)$.
		
		It is known that the $RGG(\lambda , 1)$ model on $\R^2$ shows a phase transition phenomenon (see e.g. \cite{penrose2003random}). For $\lambda > \lambda_c$, the \emph{critical intensity}, there exists a unique infinite cluster, $C \equiv C(\Phi)$, in the RGG almost surely. The value of $\lambda_c $ is not exactly known, but simulation studies such as \cite{quintanilla2000efficient} indicate that $\lambda_c \approx 1.44$. The \emph{percolation probability} $\theta(\lambda)$ is defined as the probability that the origin is present in the infinite cluster $C$, i.e., $\theta(\lambda) := \Palm(\0 \in C)$. We remark here that there is no known analytical expression for $\theta(\lambda)$ nor are there good approximations. Since we are interested in large networks, we will assume throughout our analysis that we operate in the super-critical region, i.e., $\lambda>\lambda_c$.
		
		\subsection{Marked Point Process}
		During the course of the probabilistic forwarding protocol on the RGG, each node decides independently whether to forward a particular packet with probability $p$. Marked point processes (MPPs) turn out to be a natural way to model such functions of an underlying point process.
		\begin{defn}
			Let $\Phi = \sum_i \varepsilon_{X_i}$ be a Poisson point process on $\R^2$. With each point $X_i$ of $\Phi$, associate a mark $Z_i$ taking values in some measurable space $(\mathbb{K},\mathcal{K})$ such that $\{Z_i\}_{i \in \mathbb{N}} \stackrel{iid}{\sim} \Pi(\cdot)$.
			Then, $\tilde{\Phi} = \sum_i \varepsilon_{(X_i,Z_i)}$ is called an \emph{iid marked point process} on $\R^2 \times \mathbb{K}$ with \emph{mark distribution} $\Pi(\cdot)$.
		\end{defn}
		We now state an ergodic theorem for MPPs which is used to obtain some key results required  in the analysis of the probabilistic forwarding protocol in Section \ref{sec:MPP_fwding}.
		
		\subsection{Ergodic theorem}
		\label{sec:erg_thm}
		Let $(\Omega,\mathcal{F},\P)$ be the probability space over which an iid marked point process $\tilde{\Phi} = \sum_i \varepsilon_{(X_i,Z_i)}$ is defined with mark distribution $\Pi(\cdot)$. Let $\theta_x:\Omega\rightarrow \Omega$, for $x\in \R^2$, be the operator which shifts each point of $\tilde{\Phi}$ by $-x$, i.e., $\theta_x \tilde{\Phi} = \sum_i \varepsilon_{(X_i-x,Z_i)}$ and let $(\mathbb{K},\mathcal{K}) $ be the measurable space of marks. Let $f:\mathbb{K}\times \Omega \rightarrow \R_+$ be a non-negative function of the MPP. Then, by the ergodic theorem for marked random measures (see \cite[Theorem 8.4.4]{BBK}), we have
		\begin{align}
		\frac{1}{\nu(\G_m)}\sum_{X_i \in \G_m}f(Z_i,\theta_{X_i}(\omega)) &\rightarrow \lambda \int_\mathbb{K}\E^{(\0,z)}\left[f(z,\omega) \right]\Pi (dz) \nonumber \\
		&\hspace{3cm} \P\text{-a.s.}
		\label{thm:ergodic}
		\end{align}
		as $m\rightarrow \infty$, where $\E^{(\0,z)}$ is the expectation with respect to the Palm probability $\P^{(\0,z)}$ conditional on the mark,  $z$. If $f(z,\omega) = f(\omega)$, then \eqref{thm:ergodic} reduces to
		\begin{equation}
		\frac{1}{\nu(\G_m)}\sum_{X_i \in \G_m}f(\theta_{X_i}(\omega)) \stackrel{m\rightarrow \infty}{\longrightarrow} \lambda \Palmexp\left[f(\omega)\right] \hspace{0.9cm} \P\text{-a.s.}.
		\label{eq:erg2}
		\end{equation}

		\section{Probabilistic forwarding and MPPs}
		\label{sec:MPP_fwding}
		In this section, we formulate probabilistic forwarding mechanism using the framework of marked point processes. We obtain estimates for $p_{k,n,\delta}$ and $\tau_{k,n,\delta}$ via ergodic theorems for MPPs. It should be noted here that all the graphs and point processes discussed in this section are on the whole $\R^2$ plane.
		\subsection{Single packet probabilistic forwarding}
		\label{sec:singlepkt}
		Consider the probabilistic forwarding of a single packet on $\cG \sim RGG(\Phi,1)$ defined on a Poisson point process (PPP) $\Phi$ of intensity $\lambda$ on $\R^2$. Let $\cG^\0$ be the graph created with the underlying point process being $\Phi^\0 \triangleq \Phi \cup \{\0\}$ as the vertex set, and introducing additional edges from $\0$ to nodes which are within $\Ball$, to the edge set of $\cG$. We assign a mark $1$ to a node if it decides to transmit the packet and $0$ otherwise. Thus, the mark space is $\mathbb{K}=\{0,1\}$ and $\tilde{\Phi}$ is an iid MPP with a $Ber(p)$ mark distribution. Note that the origin, $\0$, has mark $1$ since it always transmits the packet. Also, the subset of nodes which have mark $1$ form a thinned point process of intensity $\lambda p$, and the subset of vertices with mark $0$ form a $\lambda (1-p)  $--thinned process. Denote these by $\Phi^+$ and $\Phi^-$ respectively, and the corresponding RGGs by $\cG^+$ and $\cG^-$. Notice that the set of vertices of $\Phi^+$ which are in the same cluster as the origin are the vertices which receive the packet from the source and transmit it. Thus, the number of vertices in the cluster containing the origin in $\cG^+$ (call this set of nodes $|C^+_\0|$), is the number of transmissions of the packet. 
		
		In addition to the nodes of the cluster containing the origin in $\cG^+$, the nodes of $\cG^-$ which are within distance $1$ from them, also receive the packet. To account for them, we define for any cluster of nodes $S \subset \Phi^+$, the \emph{boundary} of $S$ as 
		\begin{equation*}
		\partial S = \{\v\in \Phi^- | B_1(\v) \cap S \neq \emptyset\},
		\end{equation*}
		and the \emph{extended cluster} of $S$ to be $S^{\text{ext}} = S \cup \partial S$. Then, the receivers are the nodes in $C_\0^{\text{ext}}$. We refer to this as the extended cluster of the origin. 
		
		Our interest is in large networks in which the origin is likely to be in the infinite cluster of $\cG^\0$. Moreover, since we are interested in a large fraction of nodes in the network to be successful receivers, the extended cluster of the origin has to comprise of a significant number of nodes within $\G_m$. In the limit of large $m$, this means that the extended cluster of the origin is the \emph{infinite extended cluster} (IEC), $C^{\text{ext}}$, defined as the extended cluster of $C^+ := C(\Phi^+)$. This also means that the transmitters correspond to the nodes within $\G_m$ of the infinite cluster of $\Phi^+,$ $C^+$. Thus, in the thermodynamic limit, the expected number of vertices in $C_\0 \cap \G_m$  (resp. $C_\0^{\text{ext}} \cap \G_m$) is well-approximated by the expected number of vertices within $\G_m$ of the infinite cluster $C^+$ (resp., of the IEC $C^{\text{ext}}$) for large $m$. We use the ergodic theorem stated in Section \ref{sec:erg_thm} to obtain almost sure results for the fraction of nodes within $\G_m$ of the infinite cluster $C^+$ and the IEC $C^{\text{ext}}$ in terms of the percolation probability $\theta(\lambda)$.
		%In subsequent sections, we will show that in the thermodynamic limit,  the expected number of vertices in $C^+_\0$ is well-approximated by the expected number of vertices within $\G_m$ of the infinite cluster $C^+ \ := C(\Phi^+)$ for large $m$. Similarly, the expected number of nodes within $C_\0^{\text{ext}}$, is well-approximated by the expected number of vertices within $\G_m$ in the infinite extended cluster (IEC) $C^{\text{ext}} := C^+ \cup \  \partial C^+$ (by which we mean the extended cluster of $C^+$), for large $m$. We use the ergodic theorem stated in Section \ref{sec:erg_thm}, to obtain almost sure results for the fraction of nodes within $\G_m$ of the infinite cluster $C^+$ and the IEC, $C^{\text{ext}}$ in terms of the percolation probability $\theta(\lambda)$. 
		%Since $\lambda p >\lambda_c$ (from Assumption \ref{assump}), the point process $\Phi^+$ has an infinite cluster. Call it $C^+ \equiv C(\Phi^+)$.

		\subsection{Application of the ergodic theorem}
		\label{sec:appl_erg}
		Specializing the statement in \eqref{thm:ergodic} to the probabilistic forwarding of a single packet where $\mathbb{K}=\{0,1\}$ and the marks are independent, conditional on $\Phi$, with distribution given by $\Pi(1)=1-\Pi(0)=p$, we obtain,
		\begin{align}
		\frac{1}{\nu(\G_m)}\sum_{X_i \in \G_m}f(Z_i,\theta_{X_i}(\omega)) \stackrel{m\rightarrow \infty}{\longrightarrow}  &\lambda p \E^{(\0,1)}[f(1,\omega)] \nonumber\\
		&+\lambda (1-p)\ \E^{(\0,0)}[f(0,\omega)] \nonumber \\
		&\hspace{2cm} \P\text{-a.s.}.
		\label{Eq:single_pkt_erg}
		\end{align}
		
		\par We will now use \eqref{eq:erg2} and \eqref{Eq:single_pkt_erg} to obtain key results which will be used to analyze the probabilistic forwarding of a single packet on $\R^2$. In particular, we substitute different functions $f$ in \eqref{eq:erg2} and \eqref{Eq:single_pkt_erg} to obtain the following results:
		\begin{itemize}
			\item $f(z,\omega)=1$.
			The ergodic theorem in \eqref{eq:erg2}
			%\begin{eqnarray*}
			%\frac{1}{\nu(\G_m)}\sum_{X_i \in \G_m} 1 &\stackrel{m\rightarrow \infty}{\longrightarrow}&  \lambda p\ \E^{(\0,1)}[1] + \lambda (1-p)\ \E^{(\0,0)}[1] \\
			%&& \hspace{3cm}\P\text{-a.s.}
			%\end{eqnarray*}
			results in
			\begin{equation}
			\frac{\Phi(\G_m)}{\nu(\G_m)} \ \ \ \stackrel{m\rightarrow \infty}{\longrightarrow}\ \ \  \lambda \ \ \hspace{1.5cm} \P\text{-a.s..}
			\label{Eq:num_nodes}
			\end{equation}
			As a corollary, taking the reciprocals, we obtain 
			\begin{equation}
			\frac{m^2}{\Phi(\G_m)} \ \ \ \stackrel{m\rightarrow \infty}{\longrightarrow}\ \ \  \frac{1}{\lambda} \ \ \hspace{1.5cm} \P\text{-a.s.},
			\label{Eq:recip}
			\end{equation}
			which holds in our setting since $\lambda >\lambda_c$.
			
			\item $f(z,\omega)=z$.
			%\begin{equation*}
			%\frac{1}{\nu(\G_m)}\sum_{X_i \in \G_m} Z(X_i) \stackrel{m\rightarrow \infty}{\longrightarrow}  \lambda p\ \E^{(\0,1)}[1] \ \ \ \ \P\text{-a.s.}
			%\end{equation*}
			Substituting in \eqref{Eq:single_pkt_erg}, we see that the sum on the LHS counts the number of nodes which have mark $1$ in $\G_m$. Indeed, we obtain
			\begin{equation}
			\frac{\Phi^+(\G_m)}{\nu(\G_m)} \stackrel{m\rightarrow \infty}{\longrightarrow}  \lambda p \hspace{2.1cm} \P\text{-a.s..}
			\label{thinned_erg}
			\end{equation}
			%	Similarly, taking $f(z,\omega)=1-z$, we obtain,
			%	\begin{equation}
			%		\frac{\Phi^-(\G_m)}{\nu(\G_m)}\  \stackrel{m\rightarrow \infty}{\longrightarrow}\  \lambda (1-p) \ \ \ \ \P\text{-a.s.}
			%	\end{equation}
			
			\item Let $C$ be the unique infinite cluster in $\cG$. Using the ergodic theorem in \eqref{eq:erg2} with $f(z,\omega)= \mathds{1}\{\0 \in C\} $, we see that the sum on the LHS counts the number of vertices of $\Phi$ which are present in the infinite cluster. Then, we have that
			%\begin{align*}
			%\frac{1}{\nu(\G_m)} &\sum_{X_i \in \G_m} \mathds{1}\{X_i \in C\}\\
			%&\stackrel{m\rightarrow \infty}{\longrightarrow} \lambda p\ \E^{(\0,1)}[\mathds{1}\{\0 \in C\}] +\\
			%& \hspace{2cm} \lambda (1-p)\ \E^{(\0,0)}[\mathds{1}\{\0 \in C\}] \hspace{1cm}\\
			%&= \lambda p\ \Palm(\0 \in C|Z(\0)=1) +\\
			%& \hspace{2cm} \lambda (1-p)\ \Palm(\0 \in C|Z(\0)=0)
			%\end{align*}
			%The event $\{\0 \in C\}$ is independent of the mark $Z(\0)$. Hence, both the probabilities above are equal to the percolation probability of $RGG(\lambda,1)$ giving
			\begin{equation}
			\frac{|C \cap \G_m|}{\nu(\G_m)} \stackrel{m\rightarrow \infty}{\longrightarrow}\lambda \ \theta(\lambda) \hspace{1.6cm}\P\text{-a.s.}.
			\label{Eq:erg_cluster}
			\end{equation}
			Using the dominated convergence theorem (DCT) and \eqref{Eq:recip}, we also have that
			\begin{equation}
			\E \left[\frac{|C \cap \G_m|}{\Phi(\G_m)}\right] \stackrel{m\rightarrow \infty}{\longrightarrow} \ \theta(\lambda).
			\label{eq:perc_prob}
			\end{equation}
			This means that, for large $m$, the expected fraction of vertices of the infinite cluster within $\G_m$ is a good approximation for the percolation probability. We use this to obtain an empirical estimate of the percolation probability as follows. We generate $100$ instantiations of the $RGG(\lambda,1)$ model on $\G_{251}$, for each value of $\lambda$ between $1$ and $5$ (in steps of $0.01$). The average number of vertices in the largest cluster within $\G_{251}$ is computed and taken as a proxy for the fraction of nodes of the infinite cluster. The graph obtained is shown in Figure \ref{fig:theta_lambda}. We use the values from this plot in our numerical results. 
			\begin{figure}
				\centering
				\includegraphics[width=0.5\textwidth]{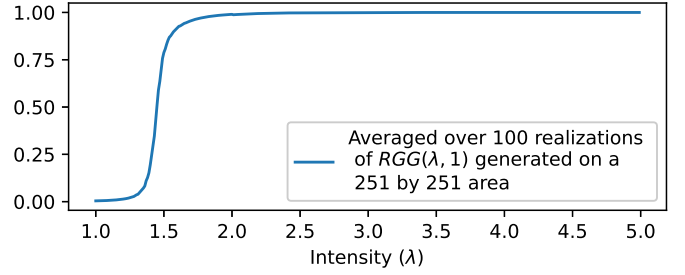}
				\caption{Percolation probability $\theta(\lambda)$ vs. intensity $\lambda$}
				\label{fig:theta_lambda}
			\end{figure}%
			
			\item Suppose $\lambda p>\lambda_c$, so that $\cG^+$ operates in the super-critical region. Let $C^+$ be the unique infinite cluster in $\cG^+$. Since $\Phi^+$ is a thinned point process of intensity $\lambda p$, we can use the result from \eqref{Eq:erg_cluster} for the infinite cluster $C^+$ to obtain 
			\begin{equation}
			\frac{|C^+\cap \G_m|}{\nu(\G_m)} \ \ \ \stackrel{m\rightarrow \infty}{\longrightarrow} \ \ \ \lambda p \ \theta(\lambda p)\ \ \ \ \ \ \ \P\text{-a.s.}.
			\label{Eq:erg_inf}
			\end{equation}
			%Define $f(z,\omega)= \mathds{1}\{\0 \in C^+\}$. Then, almost surely with respect to $\P$, we have
			%\begin{align*}
			%\frac{1}{\nu(\G_m)} &\sum_{X_i \in \G_m} \mathds{1}\{X_i \in C^+\}\\
			%&\stackrel{m\rightarrow \infty}{\longrightarrow} \lambda p\ \E^{(\0,1)}[\mathds{1}\{\0 \in C^+\}] +\\
			%& \hspace{2cm} \lambda (1-p)\ \E^{(\0,0)}[\mathds{1}\{\0 \in C^+\}] \hspace{1cm}\\
			%&= \lambda p\ \Palm(\0 \in C^+ |Z(\0)=1) +\\
			%& \hspace{2cm} \lambda (1-p)\ \Palm(\0 \in C^+|Z(\0)=0)
			%\end{align*}
			%The latter probability is equal to $0$ since $C^+ $ includes only nodes with mark $1$. The probability term $\Palm(\0 \in C^+|Z(\0)=1) $ is the Palm probability of the origin being in the infinite cluster of the thinned process $\Phi^+$ of intensity $\lambda p$. This is nothing but the percolation probability of $\cG^+$. Hence \footnote{\red{Why not directly appeal to the case of $C$ ?}\textcolor{blue}{Yes, it can be done. It will save some space. I wrote it this way so that the reader gets accustomed to the different notation.  }}
			%\begin{equation}
			%	\frac{|C^+\cap \G_m|}{\nu(\G_m)} \ \ \ \stackrel{m\rightarrow \infty}{\longrightarrow} \ \ \ \lambda p \ \theta(\lambda p)\ \ \ \ \ \ \ \P\text{-a.s.}.
			%	\label{Eq:erg_inf}
			%\end{equation}
			%This is expected since, $C^+$ is the infinite cluster in the thinned process $\Phi^+$ of intensity $\lambda p$.
			
			\item Suppose that $\lambda p>\lambda_c$ and let $C^{\text{ext}}$ denote the extended cluster of $C^+$, i.e. $C^{\text{ext}} = C^+ \cup \partial C^+$. Note that since $C^+$ is infinite, $C^{\text{ext}}$ is also infinite. Hence, we refer to it as the \emph{infinite extended cluster}, or IEC for short.  Take $f(\omega) = \mathds{1}(\Ball \cap C(\Phi^+) \neq \emptyset)$.  Observe that $\{X_i \in C^{\text{ext}}\} =  \mathds{1}(B_1(X_i) \cap C(\Phi^+) \neq \emptyset) = f(\theta_{X_i}w)$.  So, using \eqref{eq:erg2},  we have that  
			\begin{align*}
			\frac{1}{\nu(\G_m)} \sum_{X_i \in \G_m} \mathds{1}\{X_i \in& C^{\text{ext}}\} \ \\
			&\stackrel{m\rightarrow \infty}{\longrightarrow}\lambda \P(\Ball \cap C(\Phi^+) \neq \emptyset)\\ &  \hspace{4cm} \P\text{-a.s.}..
			\end{align*}
			By definition, $\P(\Ball \cap C(\Phi^+) \neq \emptyset) =  \theta(\lambda p)$,  the percolation probability of $\Phi^+$. We then have,
			\begin{align}
			\frac{|C^{\text{ext}} \cap \G_m|}{\nu(\G_m)} & \stackrel{m\rightarrow \infty}{\longrightarrow} 
			\lambda \theta (\lambda p) \hspace{2cm} \P\text{-a.s.}
			\label{eq:erg_iec}
			\end{align}
			Thus, it is natural to define,  $\theta^{\text{ext}} (\lambda , p) := \Palm(\0 \in C^{\text{ext}}) = \theta (\lambda p)$.

			Comparing RHS of \eqref{eq:erg_iec} and \eqref{thinned_erg} suggests an alternate viewpoint for the nodes that are present in the IEC. On the underlying point process $\Phi$, define new iid marks $Z' \in \mathbb{K} =\{0,1\}$ with $\mathrm{Ber}(\theta^{\text{ext}}(\lambda,p))$ distribution. This means that a vertex is attributed mark $1$, if it is in the IEC when probabilistic forwarding is carried out with forwarding probability $p$. Then, the fraction of nodes in the IEC when marks are $Z$ corresponds to the fraction of nodes with mark $1$ when marks are $Z'$. This interpretation will be useful in proposing a heuristic approach for probabilistic forwarding of multiple packets in Section \ref{sec:heuristic}.
		\end{itemize}
		
		\subsection{Probabilistic forwarding of multiple packets}
		\label{sec:MPP_mult}
		Consider now the probabilistic forwarding mechanism on $n$ packets. Each node transmits a newly received packet with probability $p$ independently of other packets. It is required to find the fraction of successful receivers, the nodes that receive at least $k$ out of the $n$ packets. From our discussion of probabilistic forwarding of a single packet (in Section \ref{sec:singlepkt}), for large $m$, the number of nodes within $\G_m$ that receive a packet from the origin is well-approximated by the number of nodes in the IEC. In a similar way, the fraction of successful receivers within $\G_m$ can be well approximated by the fraction of nodes which are present in at least $k$ out of the $n$ IECs when probabilistic forwarding is done on the RGG, $\cG^\0$. In this subsection, we will use the ergodic theorem and obtain explicit bounds on this fraction.
		\par Equip each vertex of the point process $\Phi$ with mark $\bZ=(Z_1,Z_2,\cdots,Z_n) \in \mathbb{K} = \{0,1\}^n$. Here the $j$-th co-ordinate of the mark represents transmission of the $j$-th packet on $\Phi$. More precisely, $Z_j(\cdot) \sim Ber(p)$ and, for two different vertices $u$ and $v$,  $\bZ(X_u)$ and $\bZ(X_v)$ are independent conditional on $\Phi$. Therefore, it forms an iid marked point process.  Define $C_{k,n}^{\text{ext}}$ to be the set of nodes which are present in at least $k$ out of the $n$ IECs. Taking $f(z,\omega) = \1\{\0 \in C_{k,n}^{\text{ext}}\}$ in the statement of the ergodic theorem, we obtain
		\begin{equation*}
		\frac{1}{\nu(\G_m)} \sum_{X_i \in \G_m} \mathds{1}\{X_i \in C_{k,n}^{\text{ext}}\}\ \stackrel{m\rightarrow \infty}{\longrightarrow} \lambda \  \Palm(\0 \in C_{k,n}^{\text{ext}})\ \  \P\text{-a.s.}.
		\end{equation*}
		Denote by $\theta_{k,n}^{\text{ext}}(\lambda,p) := \Palm(\0 \in C_{k,n}^{\text{ext}})$. Then the above statement reads as
		\begin{equation}
		\lim_{m\rightarrow \infty} \frac{| C_{k,n}^{\text{ext}}  \cap \G_m|}{\nu(\G_m)}\  = \  \lambda \  \theta_{k,n}^{\text{ext}}(\lambda,p)\hspace{1cm} \P\text{-a.s.}.
		\label{succ_recs}
		\end{equation}
		%We next obtain bounds on $\theta_{k,n}^{\text{ext}}(\lambda,p)$.
		
		\section{Main results}
		\label{sec:results}
		In this section, we will obtain expressions for the expected fraction of successful receivers and the expected total number of transmissions on the finite graph $G$ based on the framework that has been developed in the previous section. 
		
		While constructing $\cG^\0$ (as described in Section \ref{sec:singlepkt}), the graph corresponding to $\G_{0,0}$ can be taken to be $G_m^\0$ (with additional edges from vertices in $\G_{0,0}$ to those outside it). Alternately, $G_m^\0$ can be constructed by considering a restriction of $\cG \sim RGG(\lambda,1)$ to $\G_m$ and connecting the origin to nodes within $\Ball$. In essence, it is true that the  distribution of nodes of $G_m^\0$  and $\cG^\0 \cap \G_m$ is the same. Recall that the graph $G$ on which the probabilistic forwarding mechanism is carried out, is the component of the origin in $G_m^\0$. In light of the correspondence between the vertices of $G_m^\0$ and $\cG^\0 \cap \G_m$, the graph $G$ should correspond to the graph induced on the nodes within $ \G_m$ that are present in the cluster of the origin in $\cG^\0$. However, these nodes also include those that are contained in the cluster of the origin through paths which go outside $\G_m$ but are not connected to the origin within $\G_m$ (see Fig. \ref{fig:ckt}). We refer to these as, nodes in the cluster of the origin but \emph{without a $\G_m$-conduit} and denote them by $\widehat{C}_{\0,m}$. The following theorem states that the number of nodes without $\G_m$-conduits normalized by the area of $\G_m$ converges almost surely to $0$.
		\begin{thm}
			For $\lambda > \lambda_c$,
			$$\displaystyle{\lim\limits_{m\rightarrow \infty} \frac{|\widehat{C}_{\0,m}|}{m^2} = 0 \hspace{1cm} \P\text{-a.s.}}.$$
			As a consequence, we have
			$$\lim\limits_{m\rightarrow \infty} \frac{|C_\0(G_m^\0)|}{\lambda m^2} = \lim\limits_{m\rightarrow \infty} \frac{|C_\0(\cG^\0) \cap \G_m |}{\lambda m^2} \hspace{1cm} \P\text{-a.s.},$$
			where $C_\0(\cG^\0)$ is the set of nodes in the cluster of the origin in $\cG^\0$.
			\label{thm:rgg_out_recs}
		\end{thm}
		The latter part of the theorem is obtained by noting that $C_\0(\cG^\0)\  \cap\  \G_m = C_\0(G_m^\0) \  \cup\  \widehat{C}_{\0,m}$ with $C_\0(G_m^\0) \  \cap \  \widehat{C}_{\0,m} = \emptyset $.
		%The proof for the first part is easy to see over a subsequence since Markov inequality accompanied by the convergence in mean (similar to \cite[Lemma VI.5]{ToN}), gives convergence in probability. A detailed proof follows. %\footnote {\footnotesize{As in the case of a grid, the probability of there being a connected path in an annulus around $\G_m$, is known to tend to $1$ as $m\rightarrow \infty$ in the super-critical region, $\lambda>\lambda_c$.}}),		$\E\left[\frac{|\widehat{C}_{\0,m}|}{m^2}\right]\stackrel{m\rightarrow \infty}{\longrightarrow} 0$ , .  A detailed proof is provided below.	
	For the first part, we divide the nodes in $\widehat{C}_{\0,m}$ into those which are present within a smaller concentric $r \times r$ area $\G_{r}$, for $r<m$, and those in $\G_m \setminus \G_{r}$ (see Fig. \ref{fig:ckt}). Denote these by
		\begin{equation*}
		\widehat{S}_{r,m} = \widehat{C}_{\0,m} \cap \G_{r} \hspace{1cm}\text{and}\hspace{1cm} \widehat{T}_{r,m} = \widehat{C}_{\0,m} \setminus \widehat{S}_{r,m}
		\end{equation*}
		respectively. In the following two lemmas, we show that for an appropriate value of $r$, the number of nodes in $\widehat{S}_{r,m}$ and $\widehat{T}_{r,m}$ normalized by $m^2$ converges to $0$ almost surely.
		
		Define $s_m=\frac{m-r}{2}$, the width of the annulus $\G_m \setminus \G_r$. Let us first look at the nodes in $\widehat{T}_{r,m}$. The following lemma states that the fraction of nodes of $\widehat{T}_{r,m}$ in a narrow annulus within $\G_m$ approaches $0$ as $m\rightarrow \infty$. 
		\begin{lem} 
			For a sequence $s_m \rightarrow \infty $ with $\frac{s_m}{m} \rightarrow 0$ as $m\rightarrow \infty$, we have
			$$\lim\limits_{m\rightarrow \infty} \frac{|\widehat{T}_{r,m}|}{m^2} = 0 \hspace{1cm} \P\text{-a.s.}$$
			\label{lem:ann_nodes}
		\end{lem}
		\begin{proof}	
			The nodes in $\widehat{T}_{r,m}$ form a subset of the nodes of the underlying Poisson point process $\Phi$ which are within $\G_m \setminus \G_r$. Thus, we have,
			\begin{equation}
			|\widehat{T}_{r,m}| \le \Phi(\G_m \setminus \G_r) \hspace{1cm} \P\text{-a.s.}
			\label{eq:trm}
			\end{equation}
			It suffices now to show that $\frac{\Phi(\G_m \setminus \G_r)}{m^2} \rightarrow 0$ as $m\rightarrow \infty$, which then proves the lemma. We proceed as follows:
			\begin{align}
			\frac{\Phi(\G_m \setminus \G_r)}{m^2} &= \frac{\Phi(\G_m \setminus \G_r)}{m^2-r^2} \cdot \frac{m^2-r^2}{m^2} \nonumber\\
			%&= \frac{\Phi(\G_m \backslash \G_r)}{m^2-r^2} \cdot \left(1-\frac{(m-2s_m)^2}{m^2} \right) \\
			&= \frac{\Phi(\G_m \setminus \G_r)}{m^2-r^2} \cdot \left(\frac{4s_m}{m}-\frac{4s_m^2}{m^2} \right).
			\label{eq:term_split}
			\end{align}
			Using the ergodic result in \eqref{Eq:num_nodes} with $\G_m$ replaced by $\G_m \setminus \G_r$, we obtain
			$$\frac{\Phi(\G_m \setminus \G_r)}{m^2-r^2} \rightarrow \lambda \hspace{1cm} \P \text{-a.s.}.$$
			This is because the area of $\G_m\setminus \G_r$ is $m^2-r^2$. Moreover, since the term within parenthesis in \eqref{eq:term_split} converges to $0$, from the condition in the statement of the lemma, we have that
			\begin{align*}
			\lim\limits_{m\rightarrow \infty} \frac{|\widehat{T}_{r,m}|}{m^2} &\le \lim\limits_{m\rightarrow \infty} \frac{\Phi(\G_m \setminus \G_r)}{m^2}\\
			&= 0 \hspace{1cm} \P\text{-a.s.}.
			\end{align*}
			
		\end{proof}

		\begin{figure}
			\centering
			\includegraphics[width=0.45\textwidth]{./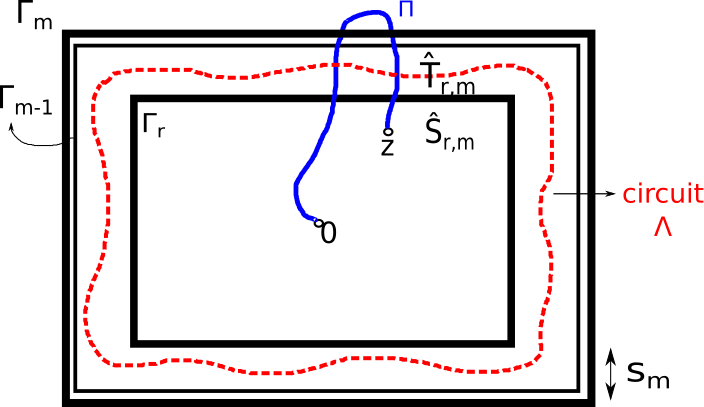}
			\caption{Circuit in the annulus $\G_{m-1} \setminus \G_r$}
			\label{fig:ckt}
		\end{figure}%
		We next address the nodes in $\widehat{S}_{r,m}$. These are nodes within $\G_r$ but without a $\G_m$-conduit. We will show that $|\widehat{S}_{r,m}|$ converges to $0$ almost surely using ideas from Russo-Seymour-Welsh (RSW) theory which is discussed in Appendix \ref{app:RSW}. For this, let $\text{Ann}_{s_m}$ denote the event of existence of a circuit in the annulus $\G_{m-1} \setminus \G_r$ as shown in Fig. \ref{fig:ckt}. Notice that if $|\widehat{S}_{r,m}| > 0$, then there cannot be such a circuit. This is stated formally in the following lemma.
		\begin{lem}
			For $\lambda> \lambda_c$, let $\widehat{B}_{r,m}$ be the event that there exists at least one point of $\widehat{C}_{\0,m}$ within $\G_r$ without a $\G_m$-conduit i.e., $\widehat{B}_{r,m} = \{|\widehat{S}_{r,m}| > 0\}$. Then $\widehat{B}_{r,m} \subseteq \text{Ann}_{s_m}^c$.
			\label{lem:small_box_recs} 
		\end{lem}
		\begin{proof}
		 The proof proceeds by showing that the events $\widehat{B}_{r,m}$ and $\text{Ann}_{s_m}$ cannot occur simultaneously. For this, suppose there is a circuit $\Lambda$ within  $\Gamma_{m-1} \setminus \Gamma_r$.  Also, suppose that some point $z \in \Phi$ that lies within $\Gamma_r$ is connected to the origin only via a path $\Pi$ that leaves $\Gamma_m$.  Then, $\Pi$ must physically cross $\Lambda$ at least twice as shown in Fig. \ref{fig:ckt}. At any of the locations where such a crossing happens, consider the two adjacent points, $x$ and $y$, of $\Phi$ that are on the path $\Pi$, but which fall on opposite sides of $\Lambda$. Note that, since $\Lambda$ is at a distance of at least $1$ from the boundary of $\Gamma_m$, both $x$ and $y$  are within $\Gamma_m$.  Also consider the two adjacent points, $u$ and $v$, of $\Phi$ that are on $\Lambda$, but which fall on opposite sides of $\Pi$. Now, $x,u,y,v$  form a quadrilateral with diagonals $xy$ and $ uv $ having length at most $1$. Hence, at least one of the sides of this quadrilateral has length at most $1$. This means that at least one of $x$ and $y$ is within distance $1$ of either $u$ or $v$ (or both).  Thus, at any crossing of $\Pi$ and $\Lambda$, either $\Pi$ and $\Lambda$ intersect at some point of $\Phi$, or $\Pi$ is connected by an edge to the circuit $\Lambda$, and the connecting edge lies entirely within $\Gamma_m$.  From this, one can construct a $\Gamma_m$-conduit between $z$ and the origin.
		\end{proof}
		
		\begin{cor}
			For $\lambda> \lambda_c$, there exists $s_m \ll m $ such that $$|\widehat{S}_{r,m}| \stackrel{m\rightarrow \infty}{\longrightarrow} 0 \hspace{1cm}\P\text{-a.s.}.$$
			\label{corr:small_box_recs}
		\end{cor}
		\begin{proof}
			 Let $\epsilon>0$. From the previous lemma and using Proposition \ref{prop:annulus}, we can  write
			\begin{equation}
			\P(|\widehat{S}_{r,m}|>\epsilon) \le \P(\text{Ann}_{s_m}^\mathsf{c})  \le 8 \left\lceil \frac{m}{s_m}\right\rceil \exp(-cs_m).
			\end{equation}	
			Taking $s_m = \frac{3\log m}{c}$ and summing over $m$, we obtain
			\begin{equation}
			\sum_m \P(|\widehat{S}_{r,m}|>\epsilon) \le \sum_{m}\frac{c'}{m^2\log (m)}+\frac{c''}{m^3} <\infty.
			\label{eq:in_nodes}
			\end{equation}
			Using the Borel-Cantelli lemma, this shows that $|\widehat{S}_{r,m}| \rightarrow 0$ as $m\rightarrow \infty$ almost surely.
		\end{proof} 
		
		\begin{proof}[Proof of Theorem \ref{thm:rgg_out_recs}]
			The choice of $s_m = \frac{3\log m}{c}$ satisfies the condition of Lemma \ref{lem:ann_nodes} as well. From Lemma \ref{lem:ann_nodes} and Corollary \ref{corr:small_box_recs},  as $m$ tends to infinity, we obtain
			\begin{equation*}
			\frac{\widehat{C}_{\0,m}}{m^2} =\frac{|\widehat{T}_{r,m}|}{m^2}+\frac{|\widehat{S}_{r,m}|}{m^2} \rightarrow 0 \hspace{1cm} \P\text{-a.s.},
			\end{equation*}
			where $r=m-\frac{6\log m}{c}$. This proves the theorem.
		\end{proof}

		Continuing the discussion prior to Theorem \ref{thm:rgg_out_recs}, the fraction of nodes in the component of the origin that are not connected via $\G_m$-conduits approaches $0$ as $m\rightarrow \infty$ almost surely. The outcome of Theorem \ref{thm:rgg_out_recs} is that in the asymptotic regime as $m\rightarrow \infty$, as long as we are interested in the fraction of nodes within the component of the origin, it does not matter whether these are connected to the origin via $\G_m$-conduits or not. In other words, the fraction of nodes within $G$ can be approximated by the fraction of nodes within $\G_m$ of the component of the origin in $\cG^\0$ for a large $m$. To get a handle on the fraction of nodes within $\G_m$ of $C_\0(\cG^\0)$, we will need the following lemma.
		\begin{lem}Let $A = \{\0 \in C(\cG^\0) \}$, where $C(\cG^\0)$ is the infinite cluster of $\cG^\0$. For $\lambda > \lambda_c$, we then have
			$$\lim\limits_{m\rightarrow \infty }\frac{|C_\0(\cG^\0) \cap \G_m|}{\lambda m^2}\  = \  \theta(\lambda) \1_A \hspace{1cm} \P\text{-a.s.}.$$
			\label{lem:comp_org}
		\end{lem}
		\begin{proof}
			We can write $$\frac{|C_\0(\cG^\0) \cap \G_m|}{\lambda m^2} = \frac{|C_\0(\cG^\0) \cap \G_m|}{\lambda m^2} \1_A + \frac{|C_\0(\cG^\0) \cap \G_m|}{\lambda m^2} \1_{A^c}.$$
			Since $A^c$ is the event that the origin is in some finite cluster, the number of nodes within $C_\0(\cG^\0)$ is finite. In the limit as $m\rightarrow \infty$, the latter term on the RHS above goes to $0$. For the first term, notice that $A=\{C_\0(\cG^\0) = C(\cG^\0) \}$. 
			This gives
			$$ \frac{|C_\0(\cG^\0) \cap \G_m|}{\lambda m^2} \1_A =  \frac{|C(\cG^\0) \cap \G_m|}{\lambda m^2} \1_A .$$
			Further, from \eqref{eq:palm_to_normal}, we have that
			$$ \lim\limits_{m\rightarrow \infty } \frac{|C(\cG^\0) \cap \G_m|}{\lambda m^2} =  \lim\limits_{m\rightarrow \infty } \frac{|C(\cG) \cap \G_m|}{\lambda m^2}  \hspace{1cm} \P\text{-a.s.}.$$
			Therefore, using \eqref{Eq:erg_cluster} in the RHS of the above equation, we obtain that 
			\begin{align*}
			\lim\limits_{m\rightarrow \infty } \frac{|C_\0(\cG^\0) \cap \G_m|}{\lambda m^2} \1_A &= \lim\limits_{m\rightarrow \infty } \frac{|C(\cG^\0) \cap \G_m|}{\lambda m^2} \1_A \\
			&= \theta(\lambda)\1_A \hspace{1cm} \P\text{-a.s.}.
			\end{align*}
		\end{proof}
		
		\textbf{Note}: It should be noted here that the statements in Theorem \ref{thm:rgg_out_recs}, Lemmas \ref{lem:small_box_recs} and \ref{lem:comp_org} and Corollary \ref{corr:small_box_recs} hold $\Palm$-a.s., since these are $\P$-a.s. statements made on the underlying graph $\cG^\0$.
		
		%In the following analysis, while considering the fraction of successful receivers or the transmitters, we take the denominator to be $\Phi^\0(\G_m)$. Notice that dividing by $|C(\Phi^\0) \cap \G_m|$ instead introduces an additional factor of $\frac{1}{\theta(\lambda)}$ in our results. With reference to the numerical results, this factor does not play a major role since $\theta(\lambda) \approx 1$ for the intensity values we consider. 
		Before we proceed, we recall the definition of the minimum forwarding probability in \eqref{Eq:pkndelta}:
		\begin{equation*}
		p_{k,n,\delta} = \inf \left\{p \ \ \Big{|} \ \ \mathbb{E} \left[ \frac{R_{k,n}(G_m^\0)}{|C_\0(G_m^\0)|} \right] \geq 1-\delta \right\},
		\end{equation*}
		where the expectation is over the graph as well as the probabilistic forwarding mechanism.  Note that in our setting, the source, $\0$, always has mark $1$ since it transmits all the $n$ packets. To be more explicit, define $\1=(1,1,\cdots,1)$ to be the vector of all $1$s of length $n$. We denote by $\MarkPalm$ the expectation with respect to the Palm probability $\P^{\0}$ given a point at the origin, conditional on it having mark $\bZ(\0)=\1$. In terms of this, the above equation translates to
		\begin{equation}
		p_{k,n,\delta} = \inf \left\{p \ \ \Big{|} \ \ \MarkPalm \left[ \frac{R_{k,n}(G_m)}{|C_\0(G_m)|} \right] \geq 1-\delta \right\}.
		\label{pkndelta}
		\end{equation}
		
		Next, since we are addressing a broadcast problem, it is necessary that a large fraction of nodes receive a packet. This, in turn necessitates that the fraction of nodes that transmit the packet is also large. With reference to the RGG on the whole plane, this means that the nodes in $\cG^+$ need to have an infinite cluster. To allow for this, we make the following assumption.
		\begin{Assumption}
			The forwarding probability $p$ is such that $\lambda p > \lambda_c$.
			\label{assump:super_crit}
		\end{Assumption}
		Notice that the $p_{k,n,\delta}$ values obtained from simulations in Figure \ref{fig:RGG_simu} conform to this assumption. The assumption is discussed in slightly more detail in Section \ref{sec:super}. We now obtain expressions for the minimum forwarding probability and the expected total number of transmissions based on these two assumptions.
		
		\subsection{Transmissions}
		Consider first the transmission of a single packet. Let $T(G_m)$ be the number of nodes of $G_m$ that receive the packet from the source and transmit it and let $\mathcal{T}(\cG) \cap \G_m$ be the set of nodes within $\G_m$ that receive the packet from the source and transmit it when probabilistic forwarding is carried out on $\cG$ \footnote{\footnotesize{It is implicit from the use of Palm probabilities that the origin is the source and probabilistic forwarding is formulated as an MPP as described in Section \ref{sec:singlepkt}.}}. From our construction, it follows that $T(G_m)$ is stochastically dominated by $|\mathcal{T}(\cG) \cap \G_m |$ since there might be nodes which receive a packet from outside $\G_m$ and transmit it. However, it can be shown that, 
			\begin{equation*}
			\lim_{m\rightarrow \infty} \frac{\E^{(\0,1)}\left[T(G_m)\right]}{m^2} = \lim_{m\rightarrow \infty} \frac{\E^{(\0,1)}\left[|\mathcal{T}(\cG) \cap \G_m |\right]}{m^2}.
			\end{equation*}
			This is because the expected fraction of transmitting nodes with no $\G_m$-conduits diminishes as $m \rightarrow \infty$. Thus, it suffices to evaluate $\lim_{m\rightarrow \infty}\frac{\E^{(\0,1)}\left[|\mathcal{T}(\cG) \cap \G_m |\right]}{m^2}$ to find the expected number of transmissions for a single packet.
		
		In the jargon of marked point processes, $\mathcal{T}(\cG)$ is the set of vertices with mark $Z(\cdot)=1$ that are in the cluster containing the origin. Note that the origin has mark $1$, since it always transmits the packet. As the vertices with mark $1$ form a thinned point process, $\Phi^+$ of intensity $\lambda p$, $\mathcal{T}(\cG)$ is the set of nodes in the cluster containing the origin in $\cG^+$. In Section \ref{sec:singlepkt}, we denoted this set by $C^+_\0$. From Assumption \ref{assump:super_crit}, the graph on $\Phi^+$ is in the super-critical regime and thus possesses a unique infinite cluster, $C^+$. The following theorem provides the expected size of $C^+_\0 \cap \G_m$. The proof proceeds by relating it to the expected size of $C^+ \cap \G_m$ and using the ergodic result in \eqref{Eq:erg_inf}.
		\begin{thm}
			For $\lambda p>\lambda_c$, we have
			\begin{equation*}
			\lim_{m\rightarrow \infty} \E^{(\0,1)} \left[\frac{|C^+_\0 \cap \G_m|}{\lambda m^2} \right] =  p \, \theta(\lambda p)^2 .
			\end{equation*}
			\label{thm:trans}
		\end{thm}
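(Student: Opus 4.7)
The plan is to decompose the expectation according to whether the origin lies in the infinite cluster $C^+$ of the thinned process $\Phi^+$, and then show almost-sure convergence of the ratio, with the limit evaluated via the ergodic results already established.

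First I would argue almost surely under $\P^{(\0,1)}$ that
\[
\frac{|C_\0 \cap \G_m|}{\Phi(\G_m)} \;\longrightarrow\; p\,\theta(\lambda p)\cdot \mathds{1}\{\0 \in C^+\}.
\]
On the event $\{\0 \in C^+\}$, uniqueness of the infinite cluster in the supercritical regime $\lambda p > \lambda_c$ forces $C_\0 = C^+$. Combining \eqref{Eq:erg_inf} with \eqref{Eq:recip} gives
\[
\frac{|C_\0 \cap \G_m|}{\Phi(\G_m)} \;=\; \frac{|C^+ \cap \G_m|}{\nu(\G_m)} \cdot \frac{\nu(\G_m)}{\Phi(\G_m)} \;\longrightarrow\; \lambda p\,\theta(\lambda p) \cdot \frac{1}{\lambda} \;=\; p\,\theta(\lambda p).
\]
On the complementary event $\{\0 \notin C^+\}$, the cluster $C_\0$ is almost surely finite, so $|C_\0 \cap \G_m|$ stays bounded while $\Phi(\G_m) \to \infty$, yielding a limit of $0$.

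Next I would pass to expectations via the dominated convergence theorem. The ratio is bounded by $1 + 1/\Phi(\G_m)$, which is integrable, and so
\[
\lim_{m \to \infty}\E^{(\0,1)}\left[\frac{|C_\0 \cap \G_m|}{\Phi(\G_m)}\right] \;=\; p\,\theta(\lambda p)\cdot \P^{(\0,1)}(\0 \in C^+).
\]
By the standard Palm interpretation of the percolation probability for the marked process, conditioning on the origin being a point with mark $1$ amounts to inserting an additional point at $\0$ into the thinned PPP $\Phi^+$ of intensity $\lambda p$; hence $\P^{(\0,1)}(\0 \in C^+) = \theta(\lambda p)$. Substituting yields the claimed value $p\,\theta(\lambda p)^2$.

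The main obstacle I expect is justifying that the almost-sure convergence in the ergodic theorems \eqref{Eq:erg_inf} and \eqref{Eq:recip}, which hold under the stationary law $\P$, transfers to $\P^{(\0,1)}$. The clean way to handle this is to note that both limits concern translation-invariant events about the densities of $\Phi$ and of the infinite cluster $C^+$, and that adding or removing a single point at the origin does not affect the existence or identity of the unique infinite cluster, nor the asymptotic densities in $\G_m$. A secondary technical point is the identification $\P^{(\0,1)}(\0 \in C^+) = \theta(\lambda p)$, which uses the independence of marks from $\Phi$ together with Slivnyak's theorem applied to the mark-$1$ sub-process; this is handled by the Palm-theoretic machinery referenced in Appendix~\ref{sec:Palm}.
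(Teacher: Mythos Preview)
Your proposal is correct and follows essentially the same approach as the paper: decompose on the event $\{\0\in C^+\}$, use the ergodic limits \eqref{Eq:recip} and \eqref{Eq:erg_inf} together with uniqueness of the infinite cluster to identify the almost-sure limit, and pass to expectations by DCT. The only organizational difference is that the paper first carries out the limit under $\P$, then transfers to $\Palmexp$ via Proposition~\ref{prop:Palm_inf}, and finally uses $\E^{\0}[\cdot]=p\,\E^{(\0,1)}[\cdot]$ (since $C_\0=\emptyset$ when $Z(\0)=0$) to isolate $\E^{(\0,1)}$; you instead work directly under $\P^{(\0,1)}$ and invoke Slivnyak to identify $\P^{(\0,1)}(\0\in C^+)=\theta(\lambda p)$, which is equivalent.
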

			\begin{proof}
				Denote by $C^+$, the unique infinite cluster of the thinned process $\Phi^+$. Define the event $A^+=\{ \0 \in C^+\} = \{B_1(\0) \cap C^+ \neq \emptyset\} \cap \{Z(\0) = 1\}$. 
				Using Lemma \ref{lem:comp_org} for the thinned process $\Phi^+$ of intensity $\lambda p$, we obtain
				\[ \frac{|C^+_\0 \cap \G_m|}{\lambda p m^2} \ \  \stackrel{m\rightarrow \infty}{\longrightarrow}  \ \ \theta(\lambda p) \mathds{1}\{A^+\}  \hspace{2cm} \P\text{-a.s.}. \]
				From the note following Lemma \ref{lem:comp_org} and using DCT, the expected values with respect to $\Palm$ also converge giving,
				\begin{align*}
				\lim_{m\rightarrow \infty} \E^{\0} \left[\frac{|C^+_\0 \cap \G_m|}{\lambda m^2} \right] & =  \ \ \theta(\lambda p) \Palm(A^+) \nonumber \\
				& = p \, \theta(\lambda p)^2,
				\end{align*}
				where the last equality uses the definition of $A^+$, $\Palm(A^+) = \Palm(B_1(\0) \cap C^+ \neq \emptyset)\ \Palm(Z(\0)=1) = p\theta(\lambda p)$, and we have also used that $\{B_1(\0) \cap C^+ \neq \emptyset\}$ and $\{Z(\0)=1\}$ are independent events with respect to $\Palm$. The proof is complete by noting that if $Z(\0) = 0$,  then $C^+_\0 = \emptyset$ and so
				$$ \E^{\0} \left[\frac{|C^+_\0 \cap \G_m|}{\lambda m^2}  \right] = p \,  \E^{(\0,1)} \left[\frac{|C^+_\0 \cap \G_m|}{\lambda m^2}  \right].$$
			\end{proof}
			
			Therefore, for large values of $m$, the expected number of transmissions, $\E^{\0,1}\left[T(G_m)\right]$, can be approximated by
			\begin{equation*}
			\E^{(\0,1)}\left[|C^+_\0 \cap \G_m|\right] \approx m^2 \lambda p \ \theta(\lambda p)^2 .
			\end{equation*} 
			
			Consider now the transmission of multiple packets. The $n$ coded packets are transmitted independently of each other. The expected total number of transmissions of all $n$ packets would just be $n$ times the expected transmissions of a single packet. Therefore, from Theorem \ref{thm:trans}, we then obtain
			\begin{equation}
			\tau_{k,n,\delta} \approx n m^2 \lambda p_{k,n,\delta} \ \left(\theta(\lambda p_{k,n,\delta})\right)^2 .
			\label{taukndelta_expr}
			\end{equation}
			
			\subsection{Minimum forwarding probability}
			In this section, we will obtain an expression for the minimum forwarding probability. Recall that this entails estimating $\MarkPalm \left[ \frac{R_{k,n}(G_m)}{|C_\0(G_m)|} \right] $, where $C_\0(G_m)$ is the set of nodes in the component of the origin in the underlying RGG on $\G_m$ and $R_{k,n}(G_m)$ are the number of nodes that receive at least $k$ out of the $n$ packets from the origin, which is the source. From Theorem \ref{thm:rgg_out_recs}, $C_\0(G_m)$ can be viewed as the set of nodes in the component of the origin in $\cG^\0$ restricted to $\G_m$ but with only those nodes which are connected to the origin via $\G_m$-conduits. $R_{k,n}(G_m)$ is the number of nodes among those in $C_\0(G_m)$, which are successful receivers. These arguments lets us think of the expectation $\MarkPalm \left[ \frac{R_{k,n}(G_m)}{|C_\0(G_m)|} \right] $, with respect to the RGG, $\cG^\0$, instead of the finite RGG, $G^\0_m$.
			
			Since we are interested in large networks, it is natural to assume that the origin is part of the infinite cluster of $\cG^\0$. This means that the cluster of the origin in $G_m^\0$ connects to the infinite cluster in $\cG^\0$ when $G_m^\0$ is embedded within it. In other words, the event $A = \{\0 \in C(\cG^\0) \}$ occurs. The results of this section are made with this assumption, which is stated below explicitly. Additional justification for this is provided in Section \ref{sec:super}.
			\begin{Assumption}
				\label{assump:orig}
				The origin is part of the infinite cluster of $\cG^\0$.
			\end{Assumption}
			
			%We proceed by first getting an estimate of the expected fraction of successful receivers in $G_m^\0$. Recall that these are the nodes that receive at least $k$ out of the $n$ packets from the origin, which is the source. 
			From the discussion above and the assumption, our interest now is to estimate $\MarkPalm_A \left[ \frac{R_{k,n}(G_m)}{|C_\0(G_m)|} \right] $. The subscript $A$ in the expectation $\MarkPalm_A$ indicates conditional expectation given that the event $A$ occurs. From Assumption \ref{assump:super_crit}, it is clear that such a conditioning can indeed be done, since $\P(A) = \theta(\lambda)>0$.
			
			The following theorem gives the expected value of the fraction of successful receivers in the limit as $m\rightarrow \infty$ given the event $A$. Before we state the theorem, recall the formulation of probabilistic forwarding as a marked point process in Section \ref{sec:MPP_fwding}. $C_{k,n}^{\text{ext}}$ was defined as the set of nodes which are present in at least $k$ out of the $n$ IECs and let $\theta_{k,n}^{\text{ext}} \equiv \theta_{k,n}^{\text{ext}}(\lambda,p) = \Palm(\0 \in C_{k,n}^{\text{ext}})$.  Additionally, define $A_{[t]}^{\text{ext}}$ to be the event that the origin is present only in the IECs corresponding to the packets $1,2,\cdots,t$. 
			
			\begin{thm}
				For $\lambda p >\lambda_c$, we have \
				\begin{equation*}
				\lim_{m\rightarrow \infty} \MarkPalm_A  \left[ \frac{R_{k,n}(G_m)}{|C_\0(G_m)|} \right] = 
				\ \frac{1}{\theta(\lambda)^2} \sum_{t=k}^n \binom{n}{t} \theta_{k,t}^{\text{ext}}\  \ProbMark (A_{[t]}^{\text{ext}}).
				\label{thm:recs}
				\end{equation*}
			\end{thm}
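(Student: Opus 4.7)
The plan is to adapt the partition-and-approximate strategy from Theorem~\ref{thm:trans} to the multi-packet setting, partitioning outcomes by the random set of packet indices for which the origin lies in the associated infinite extended cluster (IEC). By Assumption~\ref{assump:in_inf} and the annulus argument invoked for the transmission result, the limit of $\MarkPalm[R_{k,n}(G_m)/\Phi(\G_m)]$ agrees with that of the analogous fraction computed on $\cG^\0$ with receivers restricted to $\G_m$. Index the packets by $j = 1, \ldots, n$ with corresponding thinned processes $\Phi^+_j$, infinite clusters $C^+_j$, and IECs $C^{\text{ext}}_j$. Under $\ProbMark$, the origin transmits every packet, so $S := \{j : \0 \in C^{\text{ext}}_j\}$ is well-defined, and for each $j$ the single-packet analysis of Section~\ref{sec:singlepkt} combined with the argument of Theorem~\ref{thm:trans} shows that on $\{j \in S\}$ the receivers of packet $j$ within $\G_m$ coincide with $C^{\text{ext}}_j \cap \G_m$ up to a set of vanishing density, while on $\{j \notin S\}$ they form a finite cluster whose contribution to the fraction tends to zero.

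Consequently, asymptotically a node is a successful receiver iff it lies in at least $k$ of the IECs $\{C^{\text{ext}}_j : j \in S\}$, and on $\{|S| < k\}$ the fraction of successful receivers tends to zero. Partitioning over $T = S$, the contributions from $|T| < k$ vanish; for each $T$ with $|T| = t \geq k$ I replicate the derivation of \eqref{succ_recs} with marks restricted to $T$ to obtain
\begin{equation*}
\frac{|C_{k,T}^{\text{ext}} \cap \G_m|}{\nu(\G_m)} \xrightarrow{m \to \infty} \lambda \, \theta_{k,t}^{\text{ext}} \qquad \P\text{-a.s.},
\end{equation*}
where $C_{k,T}^{\text{ext}}$ denotes the set of vertices lying in at least $k$ of $\{C^{\text{ext}}_j : j \in T\}$ and the right-hand side is independent of $T$ by the exchangeability of packets under iid $\mathrm{Ber}(p)$ marks.

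Dividing by $\Phi(\G_m)/\nu(\G_m) \to \lambda$, restricting to $\{S = T\}$, applying DCT, and using Proposition~\ref{prop:Palm_inf} to transfer expectations from $\P$ to $\MarkPalm$ yields
\begin{equation*}
\MarkPalm\!\left[\frac{R_{k,n}(G_m)}{\Phi(\G_m)} \mathds{1}\{S = T\}\right] \longrightarrow \theta_{k,t}^{\text{ext}} \, \ProbMark(S = T).
\end{equation*}
By symmetry, $\ProbMark(S = T) = \ProbMark(A_{[t]}^{\text{ext}})$ for every $T$ with $|T| = t$, and summing the $\binom{n}{t}$ equal contributions over $t \in \{k, \ldots, n\}$ produces the claim. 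The main obstacle will be justifying the joint control of the event $\{S = T\}$ together with the single-packet IEC approximations across all $n$ packets in parallel, since the MPP ergodic theorem only delivers density statements about the IECs themselves; this, together with the $\P$-to-$\MarkPalm$ transition on each event $\{S = T\}$, follows the template of Theorem~\ref{thm:trans} and the Palm-theoretic results in the appendix.
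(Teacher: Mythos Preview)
Your proposal is correct and follows essentially the same strategy as the paper: partition according to the set $S$ of packet indices for which the origin lies in the corresponding IEC (this is precisely the paper's event $A_T^{\text{ext}}$), discard the negligible contributions from $|S|<k$ and from finite clusters, apply the ergodic result \eqref{succ_recs} restricted to the coordinates in $T$, pass to the limit via DCT, transfer between $\P$ and the Palm measure, and collapse the sum by exchangeability.

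Two minor remarks. First, the Palm transfer you need here is not Proposition~\ref{prop:Palm_inf} (which concerns the bare infinite cluster $C$) but its multi-packet analogue, Proposition~\ref{prop:Palm_ext_mult}; you correctly gesture at ``the Palm-theoretic results in the appendix,'' so this is only a citation slip. Second, where you multiply the a.s.\ ergodic limit by $\mathds{1}\{S=T\}$ and take expectations, the paper instead conditions on $A_T^{\text{ext}}$ and goes through an intermediate event $A_T^+$ (its Steps~4--5) to identify the conditional law of the receiver set with that of $C_{k,|T|}^{\text{ext}}$; your indicator-based route, modeled on the proof of Theorem~\ref{thm:trans}, is a cleaner way to reach the same endpoint and avoids that distributional identification.
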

			%The LHS in the above statement must be interpreted with respect to the graph $\cG^\0$ as described at the start of this subsection.
			The proof is on similar lines as that on the grid in \cite{ToN}. It relies on carefully relating the fraction of successful receivers on $G$ to the fraction of nodes present in at least $k$ out of the $n$ IECs corresponding to probabilistic forwarding on $\cG^\0$. An outline of the proof is given in Appendix \ref{app:proof_recs}.
			
			The following proposition is used to express $\ProbMark (A_{[t]}^{\text{ext}})$ in terms of  $\theta_{k,n}^{\text{ext}}$. 
			\begin{prop}
				
				\begin{equation}
				\ProbMark \left(A_{[t]}^{\text{ext}} \right) = 
				\begin{cases}
				\displaystyle{\frac{\theta_{t,n}^{\text{ext}}-\theta_{t+1,n}^{\text{ext}}}{\binom{n}{t}}} & 0 \le t \le n-1 \\
				\theta_{n,n}^{\text{ext}} & t=n
				\end{cases}.
				\end{equation}
				
				\label{prop:Atext}
			\end{prop}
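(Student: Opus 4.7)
The plan rests on two observations: the packet labels are exchangeable under $\ProbMark$, and the event ``origin lies in exactly $t$ of the IECs'' admits two natural descriptions, one in terms of the $A_T^{\text{ext}}$ and the other as a telescoping difference of the $\theta_{\cdot,n}^{\text{ext}}$.

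First I would establish the exchangeability step. Under $\P$, each vertex $X_i$ of $\Phi$ carries an iid mark $\bZ(X_i) = (Z_1(X_i),\dots,Z_n(X_i))$ whose coordinates are themselves iid $\mathrm{Ber}(p)$. Hence for any permutation $\sigma$ of $[n]$, applying $\sigma$ coordinatewise to every mark leaves the law of the marked point process $\tilde{\Phi}$ invariant; the same is true under the Palm measure $\Palm$. Moreover $\ProbMark$ simply conditions on $\bZ(\0)=\1$, and $\1$ is fixed by every coordinate permutation, so $\ProbMark$ is also $\sigma$-invariant. The IEC $C_j^{\text{ext}}$ is a measurable function of $\Phi$ and of the $j$-th coordinates $\{Z_j(X_i)\}_i$ alone, hence applying $\sigma$ permutes the collection $\{C_j^{\text{ext}}\}_{j\in[n]}$. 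Consequently the event $A_T^{\text{ext}}$, that the origin lies in $C_j^{\text{ext}}$ precisely for $j\in T$, maps to $A_{\sigma^{-1}(T)}^{\text{ext}}$, and so $\ProbMark(A_T^{\text{ext}})=\ProbMark(A_{[t]}^{\text{ext}})$ whenever $|T|=t$.

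Next I would decompose the events $\{\0\in C_{k,n}^{\text{ext}}\}$. By construction the family $\{A_T^{\text{ext}}\}_{T\subseteq[n]}$ is a partition of the probability space, and
\[
\{\0 \in C_{t,n}^{\text{ext}}\} \ = \ \bigsqcup_{S\subseteq [n],\,|S|\ge t} A_S^{\text{ext}}.
\]
Taking the disjoint difference between the $t$ and $t+1$ levels,
\[
\{\0 \in C_{t,n}^{\text{ext}}\} \setminus \{\0 \in C_{t+1,n}^{\text{ext}}\} \ = \ \bigsqcup_{|T|=t} A_T^{\text{ext}},
\]
so that $\theta_{t,n}^{\text{ext}} - \theta_{t+1,n}^{\text{ext}} = \sum_{|T|=t}\ProbMark(A_T^{\text{ext}}) = \binom{n}{t}\,\ProbMark(A_{[t]}^{\text{ext}})$ by the exchangeability established above. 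Solving gives the stated formula for $0\le t\le n-1$. The case $t=n$ is handled separately: there is a unique $T=[n]$ and $A_{[n]}^{\text{ext}} = \{\0\in C_{n,n}^{\text{ext}}\}$, yielding $\ProbMark(A_{[n]}^{\text{ext}})=\theta_{n,n}^{\text{ext}}$ directly (equivalently, one may set $\theta_{n+1,n}^{\text{ext}}:=0$ and fold both cases into one formula).

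There is no real analytic obstacle here; the only subtle point, and the one I would take care with, is verifying that the conditioning event $\{\bZ(\0)=\1\}$ is invariant under coordinate permutations, so that exchangeability carries over from $\P$ to $\ProbMark$. If one tried to run the same argument under a mark-distribution that broke the symmetry between packet indices (say different forwarding probabilities for different packets), this symmetry step would fail, and the proposition would no longer hold in its current form.
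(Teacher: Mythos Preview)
Your proposal is correct and follows essentially the same approach as the paper: decompose $\{\0\in C_{t,n}^{\text{ext}}\}$ as the disjoint union $\bigsqcup_{|S|\ge t}A_S^{\text{ext}}$, use exchangeability of the packet labels (the paper refers to Step~7 of the appendix for this, while you spell out the permutation-invariance argument directly), and telescope. Your treatment of the exchangeability step under $\ProbMark$ is slightly more detailed than the paper's, but the logical structure of the two proofs is identical.
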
 
			\begin{proof}
				The second part follows directly from the definitions of $\theta_{n,n}^{\text{ext}}$ and the event $A_{[n]}^{\text{ext}}$. For the first part, define for $T \subseteq [n]$, $A_T^{\text{ext}}$ to be the event that the origin is present in exactly the IECs indexed by $T$. Note that 
				$$	\theta_{k,n}^{\text{ext}} = \ProbMark(\0 \in C_{k,n}^{\text{ext}}) \\
				=\sum_{j=k}^{n}\sum_{\substack{T\subseteq [n] \\ |T|=j}}\ProbMark(A_T^{\text{ext}}).
				$$
				Since the event $A_T^{\text{ext}}$ depends only on the cardinality $j$ (see Step 7 in Appendix \ref{app:proof_recs}), we obtain 
				$$\theta_{k,n}^{\text{ext}} = \sum_{j=k}^{n}\binom{n}{j}\ProbMark(A_{[j]}^{\text{ext}}). $$
				We then have that $
				\theta_{t,n}^{\text{ext}}-\theta_{t+1,n}^{\text{ext}} = \binom{n}{t}\ProbMark(A_{[t]}^{\text{ext}})$ for $0 \le t \le n-1$, which is the statement of the proposition.
			\end{proof}
			
			We remark here that the statement of Theorem \ref{thm:recs} can be used to obtain an estimate for the expected fraction of successful receivers without the conditioning on the event $A$. We write 
			\begin{align*}
			\MarkPalm  \left[ \frac{R_{k,n}(G_m)}{|C_\0(G_m)|} \right]\  &= \  \theta(\lambda) \  \MarkPalm_A  \left[ \frac{R_{k,n}(G_m)}{|C_\0(G_m)|} \right] + \\
			& \hspace{1cm} (1-\theta(\lambda))\  \MarkPalm_{A^C}  \left[ \frac{R_{k,n}(G_m)}{|C_\0(G_m)|} \right]
			\end{align*}
			Notice from Fig. \ref{fig:theta_lambda} that $\theta(\lambda)$ shows a phase transition phenomenon. For the intensities we are interested in, $\P(A^c) = 1-\theta(\lambda)$ is very small and the latter term in the above equation can be neglected. This also suggests that Assumption \ref{assump:orig} is not a very strong requirement.
			
			Consequently, for large $m$, using Theorem \ref{thm:recs} and Proposition \ref{prop:Atext} in \eqref{pkndelta} yields an approximation for the minimum forwarding probability given by,
			\begin{equation}
			\displaystyle{p_{k,n,\delta} \approx \inf \left\{p \Bigg| \sum_{t=k}^{n-1} \frac{\theta_{k,t}^{\text{ext}}(\theta_{t,n}^{\text{ext}}-\theta_{t+1,n}^{\text{ext}})}{\theta(\lambda)}+\frac{\theta_{k,n}^{\text{ext}}\theta_{n,n}^{\text{ext}}}{\theta(\lambda)}   \geq 1-\delta \right\}.}
			\label{pkndelta_expr}
			\end{equation}
			%All that remains now is to evaluate $\theta_{k,n}^{\text{ext}}(\lambda,p)$.
			
			\subsection{Comparison with simulations}
			We have not been able to obtain exact expressions for the probability $\theta_{k,t}^{\text{ext}}(\lambda,p)$ in terms of the percolation probability $\theta(\lambda)$. However, in Section \ref{sec:bounds}, we provide some bounds for it. We also develop an alternate heuristic approach, which provides comparable results for the minimum forwarding probability obtained through simulations, in Section \ref{sec:heuristic}.
			
			Nevertheless, the approximation for the expected total number of transmissions, $\tau_{k,n,\delta}$ in \eqref{taukndelta_expr} can be evaluated with the knowledge of the minimum forwarding probability. In Fig. \ref{fig:comp_theory_simu}, we show the plot of $\tau_{k,n,\delta}$ normalized by $\lambda m^2$ with $n$ in which we use $p_{k,n,\delta}$ values from Fig. \ref{fig:RGG_simu}(a)
			
			\begin{figure}
				\centering
				\includegraphics[width=0.5\textwidth]{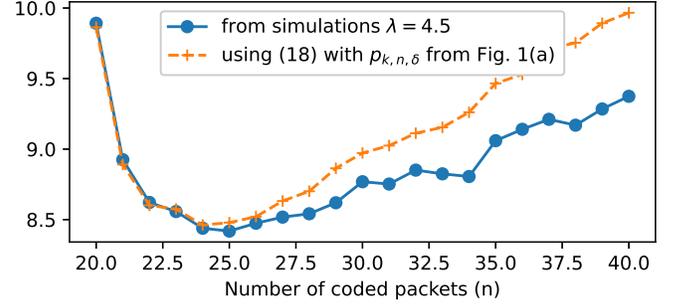}
				\caption{Comparison of the expected number of transmissions per node in the RGG($4.5,1$) model on $\G_{101}$ obtained using \eqref{taukndelta_expr} with that obtained through simulations. Note that the $p_{k,n,\delta}$ value for each point on both the curves are from the simulations in Fig. 1(a).}
				\label{fig:comp_theory_simu}
				\vspace{-1.5em}
			\end{figure}%
			It is observed that for $n \lesssim 26$, both the curves match pretty well. However, for $n>26$ they diverge. This can be attributed to the fact that as $n$ increases, $p_{k,n,\delta}$ decreases as in Fig \ref{fig:RGG_simu}(a) and thus $\lambda p_{k,n,\delta} \searrow \lambda_c$. The estimate for the percolation probability, $\theta(\lambda)$, obtained via the ergodic result in \eqref{eq:perc_prob} may not be accurate near the critical intensity, $\lambda_c$ (which is itself not exactly known). In particular, $\G_{251}$ may not be large enough for the ergodic result in \eqref{eq:perc_prob} to kick in, as we approach $\lambda_c$.
			
			Nevertheless, this provides justification to our observation that the expected number of transmissions indeed decreases when we introduce coded packets along with probabilistic forwarding. This comes with a catch that the minimum forwarding probability for a near-broadcast behaves as in Fig \ref{fig:RGG_simu}(a). In order to establish this, we provide a heuristic explanation for it in the next section.

			\section{A heuristic argument}
			\label{sec:heuristic}
			In the marked point process formulation, probabilistic forwarding of multiple packets was modeled using marks given by $\bZ=(Z_1,Z_2,\cdots,Z_n)$ with $Z_i \sim Ber(p)$ on the underlying point process $\Phi$. We refer to this as the \emph{original model}. Motivated by the alternate interpretation for the nodes in the IEC expounded at the end of Section \ref{sec:appl_erg}, in this section, we provide a heuristic approach for evaluating the minimum forwarding probability. 
			%We take a mean-field approach in that, we assume that a node being present in the IEC is independent of the presence of other nodes in the IEC. In reality, these are correlated random variables.
			%The expressions obtained from the mean-field approach suffice to explain the simulation results for the probabilistic forwarding on the RGG.
			
			As before, let $\theta^{\text{ext}}(\lambda,p)$ denote the probability that the origin is in the IEC for a single packet transmission. Associate a new mark $\bZ'=(Z'_1,Z'_2,\cdots,Z'_n) \in \mathbb{K} = \{0,1\}^n$ to each vertex of $\Phi$. The $i$-th co-ordinate of $\bZ'$ corresponds to probabilistic forwarding of the $i$-th packet. The mark $\bZ'$ is chosen such that each of the $i$ co-ordinates is either $1$ with probability $\theta^{\text{ext}}(\lambda,p)\ ( = \theta(\lambda p))$ or $0$ with the remaining probability, independent of the others. Similar to the viewpoint for the single packet transmission, our idea is to use $Z'_i$ as a proxy for a vertex to be present in the IEC in probabilistic forwarding of the $i-$th packet. We refer to this as the \emph{mean-field model.}
			
			There are two key differences between the two models defined here. Firstly, in the original model, presence of a node in the IEC is not independent of other nodes being present in the IEC. Whereas, in the mean-field model, $Z'_i(\u)$ and $Z'_i(\v)$ are chosen to be independent $Ber(\theta(\lambda p))$ random variables for two distinct vertices $\u$ and $\v$. Since $Z'_i$ is interpreted as an indicator whether a vertex is present in the $i$-th IEC, this independence is enforced, conditional on $\Phi$. Secondly, in the original model,  presence of a particular node in IECs corresponding to two different packets, are not independent. They are independent conditional on $\Phi$ but not otherwise. In the mean-field model, since $\bZ'_i(\v)$ and $\bZ'_j(\v)$ are taken to be iid, this dependence is over-looked. 
			
			To analyze the mean-field model, let us use the ergodic theorem \eqref{thm:ergodic} with
			\begin{equation*}
			f(\bz',\omega)= \sum_{j=k}^n \sum_{\substack{T \subseteq [n] \\ |T|=j}} \prod_{i \in T}z'_i \prod_{i \notin T} (1-z'_i).
			\end{equation*}
			The inner summation is $1$ only if a node has mark $1$ in exactly the co-ordinates indexed by $T$ (which has cardinality $j$). Since the outer sum goes over all $j \geq k$, the value of the function is $1$ for a vertex which has mark $1$, in at least $k$ out of the $n$ co-ordinates. From our interpretation of $\bZ'$, the value of the function, $f$, for a vertex is equal to $1$ if it is present in at least $k$ out of the $n$ IECs of the original model. Define $C'_{k,n}$ to be the set of nodes which have mark $Z'_i(\cdot) = 1$ in at least $k$ out of the $n$ packet transmissions in the mean-field model. Here, $C'_{k,n}$ acts as a proxy for $C_{k,n}^{\text{ext}}$. Since $f(\bZ'(\v),\omega)=1$ if $\v \in C'_{k,n}$, we can apply Theorem \ref{thm:ergodic}, to obtain for $\P$ almost surely
			\begin{align*}
			\frac{1}{\nu(\G_m)} &\sum_{X_i \in \G_m} \mathds{1}\{X_i \in C'_{k,n}\}\\
			& \stackrel{m\rightarrow \infty}{\longrightarrow} \lambda \sum_{\bz' \in \{0,1\}^n} \P(\bZ'=\bz') \\
			&\hspace{2cm} \E^{(\0,\bz')}\left[\sum_{j=k}^n \sum_{\substack{T \subseteq [n] \\ |T|=j}} \prod_{i \in T}\bZ'_i \prod_{i \notin T} (1-\bZ'_i)\right] \\
			&= \lambda \sum_{j=k}^n \sum_{\substack{T \subseteq [n] \\ |T|=j}} \sum_{\bz \in \{0,1\}^n} \P(\bZ'=\bz') \times \prod_{i \in T}\bz'_i \prod_{i \notin T} (1-\bz'_i).
			\end{align*}
			For a fixed $j$ and a set $T$ with $|T|=j$,  there is exactly one $\bz'$ such that  $\prod_{i \in T}\bz'_i \prod_{i \notin T} (1-\bz'_i) = 1$ and the probability of such a $\bz'$ is given by $\P(\bZ'=\bz') = \theta^{\text{ext}}(\lambda,p)^j \times (1-\theta^{\text{ext}}(\lambda,p))^{n-j} $. Thus, the expression above reduces to
			\begin{align*}
			\frac{| C'_{k,n}  \cap \G_m|}{\nu(\G_m)}
			&	 \stackrel{m\rightarrow \infty}{\longrightarrow}
			\lambda \sum_{j=k}^n \sum_{\substack{T \subseteq [n] \\ |T|=j}} \theta^{\text{ext}}(\lambda,p)^j(1-\theta^{\text{ext}}(\lambda,p))^{n-j}\\
			&= \lambda \sum_{j=k}^n \binom{n}{j} \theta^{\text{ext}}(\lambda,p)^j (1-\theta^{\text{ext}}(\lambda,p))^{n-j} \nonumber \\
			&\hspace{5cm} \P\text{-a.s.}
			\end{align*}
			Define $$\theta'_{k,n} \equiv \theta'_{k,n}(\lambda,p) = \sum_{j=k}^n \binom{n}{j} \theta(\lambda p)^j (1-\theta(\lambda p))^{n-j}. $$
			
			From our interpretation of $C'_{k,n}$ as representing $C_{k,n}^{\text{ext}}$ of the original model, we use $\theta'_{k,n}$ instead of $\theta_{k,n}^{\text{ext}}$ in \eqref{pkndelta_expr}, and after a series of manipulations, the minimum forwarding probability obtained via this heuristic approach, $p'_{k,n,\delta}$, would be the minimum probability $p$ such that
			$$\frac{1}{\theta(\lambda)}\sum_{t=k}^{n} \sum_{j=k}^{t}\binom{n}{t}\binom{t}{j}\theta(\lambda p)^{t+j}(1-\theta(\lambda p))^{n-j} \ge 1-\delta. $$
			
			This expression is similar to the expression that was obtained for the case of a grid in \cite{ToN}. Using \cite[Prop. VI.11]{ToN}, we then have
			\begin{equation}
			p'_{k,n,\delta} = \inf \left\{\ p \  \Big| \  \frac{\P(Y \ge k )}{{\theta(\lambda)}} \ge 1-\delta \right\}
			\label{eq:heur_prob}
			\end{equation}
			where $Y\sim Bin(n,(\theta(\lambda p))^2)$. 
			
			The $p'_{k,n,\delta}$ values obtained using this expression is compared alongside the simulation results in Fig. \ref{fig:comparison}(a). The expected total number of transmissions obtained via \eqref{taukndelta_expr} is plotted in Fig. \ref{fig:comparison}(b).. The simulation setup is the same as described in Section \ref{sec:simu} for the intensity $\lambda=4.5$.
			\begin{figure} 
				\centering
				\subfloat[Minimum retransmission probability]{%
					\includegraphics[width=\linewidth]{./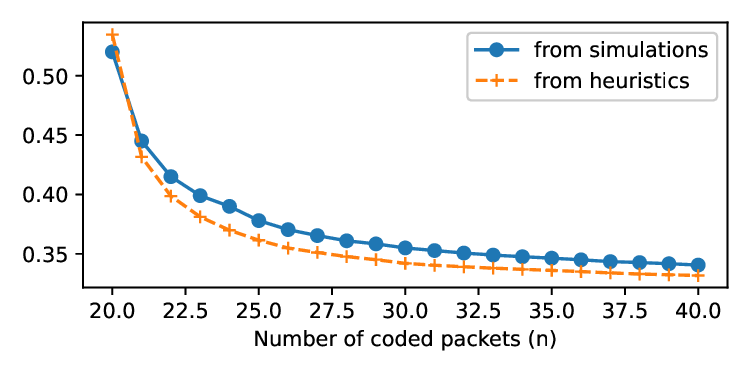}}
				\label{comp_a}
				\subfloat[Expected total number of transmissions]{%
					\includegraphics[width=\linewidth]{./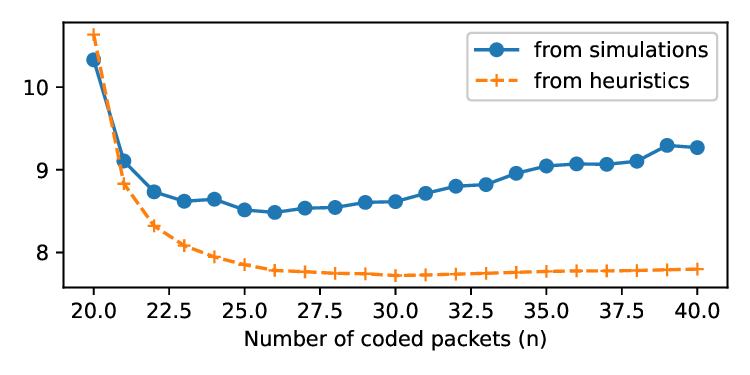}}
				\label{comp_b}\\
				\caption{Comparison of simulation results with results obtained in \eqref{eq:heur_prob} and  \eqref{taukndelta_expr} on $RGG(4.5,1)$ on $\G_{101}$ with $k=20$ packets and $\delta=0.1$}
				\label{fig:comparison} 
				%\vspace{-1.5em}
			\end{figure}
			
			It is observed that the curve for the minimum forwarding probability obtained via our analysis tracks the simulation curve pretty well. However, the curve for the expected total number of transmissions deviates from the simulation results substantially for larger values of $n$. This can be attributed to the drastic change in $\theta(\lambda)$ around the critical intensity $\lambda_c$. Even though there seems to be a minor difference in the forwarding probability of the original and the mean-field model, the behaviour of the percolation probability around $\lambda_c$ creates a huge divide between the two transmission plots in Fig \ref{fig:comparison}(b) . This behaviour is similar to what was obtained on the grid in \cite{ToN}. Nevertheless, note that the $\tau_{k,n,\delta}$ curve initially decreases to a minimum and then gradually increases with $n$ (albeit very slowly). This shows that probabilistic forwarding with coding is indeed beneficial on RGGs in terms of the number of transmissions required for a near-broadcast.
			
			\section{Discussion}\label{sec:disc}
			\subsection{Bounds on $\theta_{k,n}^{\text{ext}}$}
			\label{sec:bounds}
			We give two lower bounds for $\theta_{k,n}^{\text{ext}}(\lambda,p)$. 
			The probability $\theta_{k,n}^{\text{ext}}(\lambda,p) $ can be expressed in terms of the events $A_T^{\text{ext}}$ as follows.
			\begin{equation*}
			\theta_{k,n}^{\text{ext}}(\lambda,p) = \Palm\Bigg(\bigcup_{|T| \ge k} A_T^{\text{ext}}\Bigg) = \sum_{|T| \ge k} \Palm(A_T^{\text{ext}})
			\end{equation*}
			A simple lower bound for $\theta_{k,n}^{\text{ext}}(\lambda,p)$ can be obtained by taking the term corresponding to $T=[n]$ in the above summation.
			\begin{align*}
			\theta_{k,n}^{\text{ext}}(\lambda,p) \ge \Palm(A_{[n]}^{\text{ext}}) & = \Palm\left(\bigcap_{i=1}^n\{\0 \in C_{\infty,i}^{\text{ext}}\}\right)\\
			& \stackrel{(a)}{\ge} \prod_{i=1}^{n}\Palm\left(\0 \in C_{\infty,i}^{\text{ext}}\right)\\
			& = \Palm\left(\0 \in C_{\infty,i}^{\text{ext}}\right)^n
			\end{align*}
			Here, the inequality in (a) is via the FKG inequality since the events $\{\0 \in C_{\infty,i}^{\text{ext}}\} $ are increasing events. This gives
			\begin{equation}
			\theta_{k,n}^{\text{ext}}(\lambda,p)  \ge \theta(\lambda p)^n.
			\label{eq:lobnd1}
			\end{equation} 
			Note that this, along with Assumption \ref{assump:super_crit}, suffices to ensure that our analysis yields non-trivial results for all values of $k$ and $n$. 
			
			We now provide a second bound. For this, recall the iid marked point process $\Phi$ equipped with the mark structure $\bZ$. Define a new marked point process $\Phi_T$ with the underlying point process $\Phi$ and marks $Z_T = \prod_{i\in T}Z_i \prod_{j \notin T}(1-Z_j)$. The points with mark $1$ in $\Phi_T$, form a thinned version of $\Phi$ where each vertex is retained with probability $\P(Z_T=1|\Phi) = \P(Z_i=\1\{i \in T \}\ ,\ \ i\in [n]|\Phi) = p^{|T|}(1-p)^{n-|T|} $. Thus $\Phi_T$ is an iid marked point process with $Ber(p^{|T|}(1-p)^{n-|T|})$ marks.
			
			Let $C^{\text{ext}}(\Phi_T)$ denote the IEC of $\Phi_T$. Notice that
			\begin{equation*}
			\bigcup_{|T| \ge k}\{\0 \in C^{\text{ext}}(\Phi_T) \} \subseteq \{\0 \in C_{k,n}^{\text{ext}} \}. 
			\end{equation*}
			The probability of the event in the LHS above can be found as 
			\begin{align*}
			\Palm &\left(\bigcup_{|T| \ge k} \{\0 \in C^{\text{ext}}(\Phi_T) \}\right) \\
			& \hspace{1.5cm}= 1- \Palm\left(\bigcap_{|T| \ge k}\{\0 \notin C^{\text{ext}}(\Phi_T) \}\right)\\
			& \hspace{1.5cm} = 1- \prod_{j=k}^{n}\left(1-\theta^{\text{ext}}(\lambda, p^j(1-p)^{n-j})\right)^{\binom{n}{j}}
			\end{align*}
			Therefore, the probability $\theta_{k,n}^{\text{ext}}(\lambda,p)$ can be bounded as
			\begin{equation}
			\theta_{k,n}^{\text{ext}}(\lambda,p) \ge 1- \prod_{j=k}^{n}\left(1-\theta(\lambda p^j(1-p)^{n-j})\right)^{\binom{n}{j}}
			\label{eq:lobnd2}
			\end{equation}
			
			\subsection{A note on our assumptions}
			\label{sec:super}
			In this subsection, we provide some justifications for the assumptions made in our analysis. Our interest in this paper is to broadcast information on large networks. A basic requirement for this is that a large number of nodes in the network must be reachable from the origin. In the sub- critical regime, i.e. $\lambda < \lambda_c \approx 1.44$, the clusters are finite and small. To model large ad-hoc networks, we need the graph to be connected on a large area $\G_m$. This necessitates $\lambda$ to be in the super-critical regime and the component of the origin within $\G_m$ to be large. In the limit as $m \rightarrow \infty$, this requires that the origin be present in the infinite cluster of the underlying RGG, thus justifying Assumption \ref{assump:orig}.
			
			Further, notice that for a near-broadcast, we need the expected fraction of successful receivers to be close to $1$, i.e., $\Palmexp  \left[ \frac{|\cR_{k,n}(\cG^\0)\cap \G_m|}{\lambda \theta(\lambda) m^2} \right] \geq 1-\delta $ for some small $\delta >0$ (The denominator here is the expected number of nodes within $\G_m$ of the infinite cluster $C$.). If we would like this to hold for sufficiently large $m$, then the forwarding probability must be such that $\cR_{k,n}(\cG^\0)$ has infinite cardinality. This implies that $p$ must be such that there is an IEC during probabilistic forwarding on $\cG^\0$. Now, since existence of an IEC implies existence of an infinite cluster, the $p$ value must ensure presence of an infinite cluster. Thus $\lambda p >\lambda_c$. This justifies Assumption \ref{assump:super_crit}.
			
			It can also be seen from the simulation results in Fig. \ref{fig:RGG_simu} that $\tau_{k,n,\delta}$ is minimized when the forwarding probability is such that $\lambda p_{k,n,\delta} > \lambda_c$ or $p_{k,n,\delta} > 0.32$. Further, results obtained from our heuristic approach in Fig. \ref{fig:comparison}(a) and Fig. \ref{fig:comparison}(b) also suggest that the expected total number of transmissions is indeed minimized when operating in the super-critical regime.
			
			%\subsection{Discrepancy in $\tau_{k,n,\delta}$ plot}
			%The expression for the expected total number of transmissions obtained in \eqref{taukndelta_expr} and plotted in Figure \ref{fig:comparison}(b) differs substantially from the plot obtained through  simulations. This deviation is more pronounced for larger values of $n$. As in the case of the grid (see \cite{ToN}), this can be attributed to $m$ not being sufficiently large in our simulations for the ergodic results to kick in.   Note here that there is an additional step of converting from the Palm probability $\P^\0$ to $\P$ which can potentially increase the $m$ required, further.
			%
			%Additionally, notice that the $p_{k,n,\delta}$ values are such that $\lambda p_{k,n,\delta}$ is very close to the critical threshold $\lambda_c \approx 1.44$ for higher values of $n$. The curve for the percolation probability changes drastically near the critical intensity as seen in Figure \ref{fig:theta_lambda}. Minor changes in the value of the forwarding probability can result in substantial deviations for the expected total number of transmissions.
			%
			%Furthermore, there is the randomness in the underlying graph. The simulations have been performed on a limited collection of realizations of the RGG. It is possible that a larger collection of RGGs might result in a more accurate curve. 

			\bibliographystyle{IEEEtran}
			\bibliography{RGGrefs}

% Generated by IEEEtran.bst, version: 1.14 (2015/08/26)
\begin{thebibliography}{10}
\providecommand{\url}[1]{#1}
\csname url@samestyle\endcsname
\providecommand{\newblock}{\relax}
\providecommand{\bibinfo}[2]{#2}
\providecommand{\BIBentrySTDinterwordspacing}{\spaceskip=0pt\relax}
\providecommand{\BIBentryALTinterwordstretchfactor}{4}
\providecommand{\BIBentryALTinterwordspacing}{\spaceskip=\fontdimen2\font plus
\BIBentryALTinterwordstretchfactor\fontdimen3\font minus
  \fontdimen4\font\relax}
\providecommand{\BIBforeignlanguage}[2]{{%
\expandafter\ifx\csname l@#1\endcsname\relax
\typeout{** WARNING: IEEEtran.bst: No hyphenation pattern has been}%
\typeout{** loaded for the language `#1'. Using the pattern for}%
\typeout{** the default language instead.}%
\else
\language=\csname l@#1\endcsname
\fi
#2}}
\providecommand{\BIBdecl}{\relax}
\BIBdecl

\bibitem{vaze2015random}
R.~Vaze, \emph{Random Wireless Networks}.\hskip 1em plus 0.5em minus
  0.4em\relax Cambridge Univ.\ Press, 2015.

\bibitem{franceschetti2008random}
M.~Franceschetti and R.~Meester, \emph{Random Networks for Communication: From
  Statistical Physics to Information Systems}.\hskip 1em plus 0.5em minus
  0.4em\relax Cambridge Univ.\ Press, 2008.

\bibitem{williams2002comparison}
B.~Williams and T.~Camp, ``Comparison of broadcasting techniques for mobile ad
  hoc networks,'' in \emph{Proc.\ 3rd ACM {I}nt.\ {S}ymp. Mobile {A}d {H}oc
  {N}etworking \& {C}omputing}, 2002, pp. 194--205.

\bibitem{fragouli2008efficient}
C.~Fragouli, J.~Widmer, and J.-Y. Le~Boudec, ``Efficient broadcasting using
  network coding,'' \emph{IEEE/ACM Trans.\ Networking}, vol.~16, no.~2, pp.
  450--463, 2008.

\bibitem{wang2013coding}
S.~Wang, G.~Tan, Y.~Liu, H.~Jiang, and T.~He, ``Coding opportunity aware
  backbone metrics for broadcast in wireless networks,'' \emph{IEEE Trans.\
  Parallel and Distributed Systems}, vol.~25, no.~8, pp. 1999--2009, 2013.

\bibitem{tseng2002broadcast}
Y.-C. Tseng, S.-Y. Ni, Y.-S. Chen, and J.-P. Sheu, ``The broadcast storm
  problem in a mobile ad hoc network,'' \emph{Wireless Networks}, vol.~8, no.
  2/3, pp. 153--167, 2002.

\bibitem{forero2019thesis}
F.~Forero~Rodr{\'\i}guez, ``Geometric aspects and random delays in
  probabilistic broadcasting for wireless ad hoc networks,'' Ph.D.
  dissertation, Univ.\ Estadual de Campinas, Brazil, and Univ.\ de Los Andes,
  Colombia, 2019.

\bibitem{sasson2003probabilistic}
Y.~Sasson, D.~Cavin, and A.~Schiper, ``Probabilistic broadcast for flooding in
  wireless mobile ad hoc networks,'' in \emph{Proc.\ WCNC 2003}, vol.~2, March
  16--20, 2003, pp. 1124--1130.

\bibitem{haas2006gossip}
Z.~J. Haas, J.~Y. Halpern, and L.~Li, ``Gossip-based ad hoc routing,''
  \emph{IEEE/ACM Trans.\ Networking}, vol.~14, no.~3, pp. 479--491, 2006.

\bibitem{ToN}
B.~R.~V. {Kumar} and N.~{Kashyap}, ``Probabilistic forwarding of coded packets
  on networks,'' \emph{IEEE/ACM Trans.\ Networking}, vol.~29, no.~1, pp.
  234--247, 2021.

\bibitem{chiu2013stochastic}
S.~N. Chiu, D.~Stoyan, W.~S. Kendall, and J.~Mecke, \emph{Stochastic Geometry
  and Its Applications}.\hskip 1em plus 0.5em minus 0.4em\relax John Wiley \&
  Sons, 2013.

\bibitem{baccelli2010stochastic}
F.~Baccelli and B.~B{\l}aszczyszyn, \emph{Stochastic Geometry and Wireless
  Networks, Volumes I and II}.\hskip 1em plus 0.5em minus 0.4em\relax Now
  Publishers Inc, 2010.

\bibitem{penrose2003random}
M.~Penrose, \emph{Random Geometric Graphs}.\hskip 1em plus 0.5em minus
  0.4em\relax Oxford Univ.\ Press, 2003.

\bibitem{quintanilla2000efficient}
J.~Quintanilla, S.~Torquato, and R.~M. Ziff, ``Efficient measurement of the
  percolation threshold for fully penetrable discs,'' \emph{Journal of Physics
  A: Mathematical and General}, vol.~33, no.~42, p. L399, 2000.

\bibitem{BBK}
\BIBentryALTinterwordspacing
F.~Baccelli, B.~B{\l}aszczyszyn, and M.~K. Karray, \emph{Random Measures, Point
  Processes, and Stochastic Geometry}.\hskip 1em plus 0.5em minus 0.4em\relax
  preprint, 2020. [Online]. Available:
  \url{https://web.ma.utexas.edu/simons/wp-content/uploads/2020/02/BBK.pdf}
\BIBentrySTDinterwordspacing

\end{thebibliography}
			
			\appendices
			
			\section{Palm probabilities}\label{sec:Palm}
			In this section, we prove three main propositions which are used in the analysis of the probabilistic forwarding protocol. Let $\cG \sim RGG(\lambda,1)$ be a random geometric graph on $\R^2$ defined on some probability space $(\Omega,\mathcal{F},\P)$. The underlying Poisson point process, $\Phi$, is of intensity $\lambda$. The intensity $\lambda$ is such that we operate in the super-critical region, i.e., $\lambda>\lambda_c$. Let $C \equiv C(\Phi)$ be the unique infinite cluster in $\cG$. Let $\Phi^\0 = \Phi \cup \{\0\}$ denote the Palm version of $\Phi$ and let $C(\Phi^\0)$ be the infinite cluster in it. Denote by $\Palm$, the Palm probability of the origin and $\Palmexp$, the expectation with respect to $\Palm$. We now show that the limiting fraction of vertices in $C$ within $\G_m$ remains the same with respect to both $\E$ and $\Palmexp$.
			
			\begin{prop}
				\begin{equation*}
				\lim_{m\rightarrow \infty} \Palmexp \left[\frac{|C \cap \G_m|}{m^2}\right] = 	\lim_{m\rightarrow \infty} \E \left[\frac{|C \cap \G_m|}{m^2}\right]
				\end{equation*}
				\label{prop:Palm_inf}
			\end{prop}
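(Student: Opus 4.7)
The plan is to pass from the Palm expectation to an ordinary expectation via Slivnyak's theorem, then to compare $C(\Phi^\0)$ with $C(\Phi)$ on the same realization and show that their symmetric difference has expected cardinality bounded uniformly in $m$. By Slivnyak, under $\Palm$ the process $\Phi^\0$ has the same law as $\Phi \cup \{\0\}$ under $\P$, so
\begin{equation*}
\Palmexp\!\left[\frac{|C(\Phi^{\0}) \cap \G_m|}{m^2}\right]
  = \E\!\left[\frac{|C(\Phi \cup \{\0\}) \cap \G_m|}{m^2}\right].
\end{equation*}
It then suffices to prove that $\E\bigl[|C(\Phi \cup \{\0\}) \cap \G_m| - |C(\Phi) \cap \G_m|\bigr]$ is bounded in $m$, because the RHS of the proposition already has a limit (combining \eqref{Eq:erg_cluster} with DCT via the domination $|C\cap \G_m|/m^2 \le \Phi(\G_m)/m^2$, which has mean $\lambda$).

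For the pathwise comparison I would argue that adjoining a single point can only enlarge the infinite cluster, and that the points newly absorbed into $C(\Phi \cup \{\0\})$ must either be $\0$ itself or belong to one of the finite clusters of $\Phi$ meeting $B_1(\0)$ (since the origin can only merge a finite cluster into the infinite cluster through an edge of length at most $1$). This gives
\begin{equation*}
0 \le |C(\Phi \cup \{\0\}) \cap \G_m| - |C(\Phi) \cap \G_m| \le 1 + M,
\qquad
M := \sum_{\substack{x \in \Phi \cap B_1(\0) \\ x \notin C(\Phi)}} |C_x(\Phi)|,
\end{equation*}
where $C_x(\Phi)$ is the (necessarily finite) cluster of $\Phi$ containing $x$. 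By the Campbell--Mecke formula combined with Slivnyak,
\begin{equation*}
\E[M] \;=\; \lambda \, \nu(B_1(\0))\,\Palmexp\!\left[|C_\0(\Phi^{\0})|\,\mathds{1}\{\0 \notin C(\Phi^{\0})\}\right],
\end{equation*}
which is finite because in the supercritical regime $\lambda > \lambda_c$ the size of a finite cluster in continuum percolation has exponentially decaying tails (see, e.g., Penrose, \emph{Random Geometric Graphs}, or Meester--Roy, \emph{Continuum Percolation}).

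Combining the pieces, the difference between the two expectations in the proposition statement is at most $(1+\E[M])/m^2 = O(1/m^2)$, which tends to zero as $m \to \infty$. Hence both limits exist and agree. The main obstacle is exactly this finite-cluster tail bound: the pathwise comparison and the Campbell--Mecke step are routine, but the exponential decay of finite cluster sizes in supercritical continuum percolation is the one nontrivial probabilistic input on which the proof rests.
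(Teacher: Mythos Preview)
Your proof is correct and rests on the same pathwise observation as the paper: adjoining the origin to $\Phi$ can enlarge the infinite cluster only by $\{\0\}$ together with those finite clusters of $\Phi$ that meet $B_1(\0)$. The difference is in how the limit is taken. The paper simply notes that this set is almost surely finite (there are at most seven such finite clusters, each a.s.\ finite), so after dividing by $m^2$ the correction vanishes $\P$-a.s., and the result follows by dominated convergence. You instead control the \emph{expectation} of the correction directly via Campbell--Mecke, which forces you to invoke the finiteness of $\Palmexp[|C_\0|\,\mathds{1}\{\0\notin C\}]$ --- a true fact, but one that relies on the nontrivial subcritical-type tail decay of finite clusters in the supercritical regime. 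Your route buys a quantitative $O(1/m^2)$ rate, at the cost of importing that moment bound; the paper's route is more elementary, needing only the almost-sure finiteness of the local finite clusters and no moment estimate at all. In particular, the ``main obstacle'' you flag is not actually needed for the statement as written.
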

			\begin{proof}
				Let $C_1,C_2,\cdots,C_K$ be finite components in $\cG$ which intersect the ball of radius $1$ centered at the origin, i.e., $C_i\cap \Ball \neq \emptyset,\ \ \ \ \forall i \in \{1,2,\cdots,K\}$. Since vertices from distinct finite components $C_i$ and $C_j$, should be at least at a distance of $1$ from each other, the number of such components is bounded. In particular, $K$ is a random variable with $K \leq 7\  a.s.$.
				The infinite clusters in the $RGG(\Phi^\0,1)$ and $RGG(\Phi,1)$ models can be related in the following way:
				\begin{equation*}
				C(\Phi^\0) = 
				\begin{cases}
				C(\Phi) \cup C_1 \cup \cdots & \cup \  C_K \cup \{\0\}  \\ &\text{ if } C\cap \Ball \neq \emptyset \\
				C(\Phi) & \text{ if } C\cap \Ball = \emptyset
				\end{cases} 
				\end{equation*}
				Using this, we can write 
				\begin{align*}
				\frac{|C(\Phi^\0) \cap \G_m|}{m^2} &= \frac{|C(\Phi) \cap \G_m|}{m^2} \\
				&\  \hspace{0.5cm}+\ \sum_{i=1}^K \frac{|C_i \cap \G_m|}{m^2}\ \mathds{1}\{ C\cap \Ball \neq \emptyset\}
				\end{align*}
				Since $K \le 7 \ a.s.$ and $|C_i| < \infty$ for all $i=1,2,\cdots,K$, we have
				\begin{equation*}
				\sum_{i=1}^K \frac{|C_i \cap \G_m|}{m^2} \stackrel{m\rightarrow \infty}{\longrightarrow} 0 \ \ \ \P\text{-a.s.} .
				\end{equation*}
				Thus, we deduce that	
				\begin{equation}
				\lim_{m\rightarrow \infty} \frac{|C(\Phi^\0) \cap \G_m|}{m^2} = 	\lim_{m\rightarrow \infty} \frac{|C(\Phi) \cap \G_m|}{m^2} \ \ \ \ \ \ \P\text{-a.s.}
				\label{eq:palm_to_normal}
				\end{equation}
				Since the random variables involved are bounded by $1$, applying the dominated convergence theorem (DCT) gives the desired result.
			\end{proof}
			
			\begin{cor}	
				\begin{equation*}
				\lim_{m\rightarrow \infty} \Palmexp \left[\frac{|C \cap \G_m|}{m^2}\right] = \lambda \theta(\lambda)
				\end{equation*}
			\end{cor}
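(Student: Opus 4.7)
The plan is to apply the preceding proposition to replace the Palm expectation $\Palmexp$ by the ordinary expectation $\E$, which reduces the task to showing
\[
\lim_{m\to\infty} \E\!\left[\frac{|C\cap\G_m|}{m^2}\right] = \lambda\,\theta(\lambda).
\]
Once this is established, the corollary follows by concatenating with the proposition. So the real content is identifying the value of the common limit, and for this there are two natural routes.

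The first route is via the ergodic identity \eqref{Eq:erg_cluster}, which gives $|C\cap\G_m|/m^2 \to \lambda\,\theta(\lambda)$ $\P$-almost surely. To upgrade this almost-sure convergence to convergence of expectations, I would dominate pointwise by $|C\cap\G_m|/m^2 \le \Phi(\G_m)/m^2$. The envelope $\Phi(\G_m)/m^2$ converges $\P$-almost surely to $\lambda$ (it is itself a special case of the ergodic theorem \eqref{eq:erg2}), and its expectation equals $\lambda$ for every $m$ because $\Phi(\G_m)\sim\mathrm{Poi}(\lambda m^2)$. The generalized dominated convergence theorem (Vitali's form, with a convergent sequence of dominating variables whose expectations converge) then yields $\E[|C\cap\G_m|/m^2]\to\lambda\,\theta(\lambda)$.

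A cleaner second route avoids any convergence argument at all: by the Campbell--Mecke formula for the Poisson point process together with Slivnyak's theorem (which says the reduced Palm version is distributed as $\Phi\cup\{x\}$), one has
\[
\E[|C\cap\G_m|] \;=\; \E\!\left[\sum_{X_i\in\G_m}\mathds{1}\{X_i\in C(\Phi)\}\right] \;=\; \lambda\int_{\G_m}\P^x(x\in C)\,dx \;=\; \lambda\,\theta(\lambda)\,m^2,
\]
using stationarity to identify $\P^x(x\in C)$ with $\theta(\lambda)$ independently of $x$. Thus $\E[|C\cap\G_m|/m^2]=\lambda\,\theta(\lambda)$ is in fact an exact equality for every $m$, and the corollary follows immediately from the proposition.

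There is essentially no obstacle here; the proposition does the real work by absorbing the difference between the Palm and stationary versions of $\Phi$. The only care needed is in the expectation-convergence step of the first route, which is why the second route (invoking Slivnyak directly) is preferable if one wants a one-line proof.
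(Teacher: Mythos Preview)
Your first route is exactly the paper's argument: the paper's proof reads in its entirety ``This directly follows from the previous proposition and \eqref{Eq:erg_cluster},'' i.e.\ replace $\Palmexp$ by $\E$ using the proposition, then invoke the almost-sure ergodic limit \eqref{Eq:erg_cluster} and (implicitly) pass to expectations. Your care with the generalized DCT step---dominating by $\Phi(\G_m)/m^2$, whose expectation is exactly $\lambda$ for every $m$---is a detail the paper leaves tacit, but the structure is identical.

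Your second route via Campbell--Mecke and Slivnyak is genuinely different from the paper. It bypasses the ergodic theorem entirely and yields the exact identity $\E[|C\cap\G_m|]=\lambda\,\theta(\lambda)\,m^2$ for every $m$, not merely in the limit; combined with the proposition this gives a one-line proof. The paper's route has the advantage of being self-contained within the ergodic framework already set up in Section~\ref{sec:appl_erg}, whereas your route trades that for a sharper statement at the cost of invoking Palm calculus more directly. Both are correct.
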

			\begin{proof}
				This directly follows from the previous proposition and \eqref{Eq:erg_cluster}.
			\end{proof}
			
			Next, consider the formulation of the marked point process described in Section \ref{sec:MPP_fwding}. Let $C^{\text{ext}} \equiv C^{\text{ext}}(\Phi)$ be the infinite extended cluster (IEC). We now show an analogue of the previous proposition for $C^{\text{ext}}$.
			
			\begin{prop}
				\begin{equation*}
				\lim_{m\rightarrow \infty} \Palmexp \left[\frac{|C^{\text{ext}} \cap \G_m|}{m^2}\right] = 	\lim_{m\rightarrow \infty} \E \left[\frac{|C^{\text{ext}} \cap \G_m|}{m^2}\right]
				\end{equation*}
				\label{prop:Palm_ext}
			\end{prop}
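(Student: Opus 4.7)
The plan is to mirror the proof of Proposition~\ref{prop:Palm_inf} almost verbatim, using dominated convergence and a bounded-perturbation argument. By Slivnyak's theorem for the Poisson point process, the Palm version $\Phi^\0$ at the origin has the same distribution as $\Phi \cup \{\0\}$ under $\P$ (with an independent mark $Z(\0)$ attached to the inserted point). Since $|C^{\text{ext}} \cap \G_m|/m^2 \le 1$ trivially, dominated convergence reduces the proposition to establishing the $\P$-a.s. identity
\[
\lim_{m \to \infty} \frac{|C^{\text{ext}}(\Phi^\0) \cap \G_m|}{m^2} \;=\; \lim_{m \to \infty} \frac{|C^{\text{ext}}(\Phi) \cap \G_m|}{m^2}.
\]

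To control the difference of the two IECs, I would condition on $Z(\0)$. When $Z(\0)=0$, the thinned process $\Phi^+$ is unaffected by the inserted point, so $C^+(\Phi^\0) = C^+(\Phi)$; the only possible new contribution to the extended cluster is $\0$ itself, which joins $\partial C^+$ precisely when $B_1(\0) \cap C^+ \neq \emptyset$. When $Z(\0)=1$, the inserted point joins $\Phi^+$ and may merge $C^+$ with the finite components $C_1,\ldots,C_K$ of $\Phi^+$ that intersect $\Ball$. As in the proof of Proposition~\ref{prop:Palm_inf}, a packing argument in the unit ball gives $K \le 7$ almost surely, and the boundaries $\partial C_1,\ldots,\partial C_K$ of these absorbed components are correspondingly appended to the extended cluster.

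Consequently, the symmetric difference $C^{\text{ext}}(\Phi^\0) \triangle C^{\text{ext}}(\Phi)$ is contained in $\{\0\} \cup \bigcup_{i=1}^{K}\bigl(C_i \cup \partial C_i\bigr)$. Each $C_i$ has a.s.\ finite cardinality since it is a finite component of the Boolean model on $\Phi^+$, and each $\partial C_i$ is a.s.\ finite because a bounded neighbourhood of a finite set contains only finitely many points of $\Phi^-$ a.s. Hence the symmetric difference is a.s.\ a finite set, and dividing its cardinality by $m^2$ and letting $m\to\infty$ gives $0$ a.s. This furnishes the almost-sure identity above, after which DCT (together with the ergodic statement \eqref{eq:erg_iec}) concludes the proof.

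The principal obstacle, compared with Proposition~\ref{prop:Palm_inf}, is that one must track how inserting the origin perturbs both the infinite cluster $C^+$ and its boundary $\partial C^+$: the two cases $Z(\0)=0$ and $Z(\0)=1$ affect these objects differently, and one must argue that the boundary contributions from absorbed finite components remain almost surely finite. This is handled cleanly by the a.s.\ finiteness of subcritical clusters of $\Phi^+$ touching $\Ball$, together with the Poisson finiteness of point counts in bounded sets.
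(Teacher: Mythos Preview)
Your proposal is correct and follows essentially the same route as the paper: both arguments show that $C^{\text{ext}}(\Phi^\0)$ and $C^{\text{ext}}(\Phi)$ differ by at most finitely many finite extended clusters (via the $K\le 7$ packing bound in $\Ball$), so the normalized difference vanishes a.s.\ and DCT finishes. The only cosmetic difference is that you split cases on the mark $Z(\0)$ whereas the paper splits on whether $C^+\cap\Ball\neq\emptyset$ and uses a sandwich inequality; both decompositions yield the same bounded-perturbation conclusion.
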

			\begin{proof}
				The proof is along the same lines as that in Proposition \ref{prop:Palm_inf}. 	Let $C_1,C_2,\cdots,C_K$ be finite components in $\cG^+$ which intersect the ball of radius $1$ centered at the origin, i.e., $C_i\cap \Ball \neq \emptyset,\ \ \ \ \forall i \in \{1,2,\cdots,K\}$. Here again $K \leq 7 \  a.s.$. Now, suppose that $C^+\cap \Ball \neq \emptyset $, then regardless of the mark of the origin, it is true that $C^{\text{ext}}(\Phi^\0) \subseteq C^{\text{ext}}(\Phi) \cup C_1^{\text{ext}} \cup \cdots \cup \ C_K^{\text{ext}}$ (with equality being true when the origin has mark $1$). If on the other hand  $C^+\cap \Ball = \emptyset $, then $C^{\text{ext}}(\Phi^\0) = C^{\text{ext}}(\Phi)$. %The IECs $C^{\text{ext}}(\Phi)$ and $C^{\text{ext}}(\Phi^\0)$ can be related in the following manner \footnote{\red{I think in the below identity whether $\0$ has mark $1$ or $0$ needs to be taken into account. The first equality might only be an inclusion.}}
				%\begin{equation*}
				%	C^{\text{ext}}(\Phi^\0) = 
				%	\begin{cases}
				%		C^{\text{ext}}(\Phi) \cup C_1^{\text{ext}} \cup \cdots &\cup \ C_K^{\text{ext}} \\
				%		&\text{ if } C^+\cap \Ball \neq \emptyset \\
				%		C^{\text{ext}}(\Phi)			&\text{ if } C^+\cap \Ball = \emptyset
				%	\end{cases} 
				%\end{equation*}
				Using this, we can write 
				\begin{align*}
				\frac{|C^{\text{ext}}(\Phi^\0) \cap \G_m|}{m^2} &\leq \frac{|C^{\text{ext}}(\Phi) \cap \G_m|}{m^2} \\
				&\  \hspace{0.2cm}+\ \sum_{i=1}^K \frac{|C_i^{\text{ext}} \cap \G_m|}{m^2}\ \mathds{1}\{ C^+\cap \Ball \neq \emptyset\}.
				\end{align*}
				Note that, if $C_i$ is a finite cluster, then so is $C_i^{\text{ext}}$ and hence the summation on the RHS above tends to $0$ as $m\rightarrow \infty$. Since we trivially have that 
				\begin{equation*}
				\frac{|C^{\text{ext}}(\Phi) \cap \G_m|}{m^2} \leq \frac{|C^{\text{ext}}(\Phi^\0) \cap \G_m|}{m^2}, 
				\end{equation*}
				in the limit of large $m$, the fraction $\frac{|C^{\text{ext}}(\Phi^\0) \cap \G_m|}{m^2} $ is sandwiched between the two limits yielding
				\begin{equation*}
				\lim_{m\rightarrow \infty} \frac{|C^{\text{ext}}(\Phi^\0) \cap \G_m|}{m^2} = 	\lim_{m\rightarrow \infty} \frac{|C^{\text{ext}}(\Phi) \cap \G_m|}{m^2} \ \ \ \ \ \ \P\text{-a.s.}
				\end{equation*}
				Using DCT gives the statement of the proposition.
			\end{proof}
			
			A similar argument extends to $C_{k,n}^{\text{ext}}$ as well, which is stated in the following proposition.
			
			\begin{prop}
				\begin{equation*}
				\lim_{m\rightarrow \infty} \Palmexp \left[\frac{|C_{k,n}^{\text{ext}} \cap \G_m|}{m^2}\right] = 	\lim_{m\rightarrow \infty} \E \left[\frac{|C_{k,n}^{\text{ext}} \cap \G_m|}{m^2}\right]
				\end{equation*}
				\label{prop:Palm_ext_mult}
			\end{prop}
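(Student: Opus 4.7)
The plan is to mirror the proof of Proposition \ref{prop:Palm_ext} applied coordinate-by-coordinate across all $n$ packets, then stitch the per-packet sandwich bounds together via a combinatorial argument for the $k$-out-of-$n$ event. For each $i \in [n]$, let $\Phi^+_i$ denote the thinned process of packet-$i$ transmitters (coordinate $Z_i = 1$), let $C_i^{\text{ext}}$ be its extended cluster on $\R^2$, and let $C_{i,1},\ldots,C_{i,K_i}$ be the finite clusters of $\cG^+_i$ that meet $\Ball$. The same packing argument as in Proposition \ref{prop:Palm_ext} yields $K_i \le 7$ almost surely and the two-sided inclusion
$$C_i^{\text{ext}}(\Phi) \;\subseteq\; C_i^{\text{ext}}(\Phi^\0) \;\subseteq\; C_i^{\text{ext}}(\Phi) \cup \{\0\} \cup \bigcup_{j=1}^{K_i} C_{i,j}^{\text{ext}},$$
so IEC-membership of every vertex $v \notin \{\0\} \cup \bigcup_{i,j} C_{i,j}^{\text{ext}}$ is unchanged when switching from $\Phi$ to $\Phi^\0$, simultaneously for each packet $i$.

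Since membership in $C_{k,n}^{\text{ext}}$ is a symmetric function of the tuple $(\mathds{1}\{v \in C_i^{\text{ext}}\})_{i=1}^n$, the previous display forces the symmetric difference $C_{k,n}^{\text{ext}}(\Phi^\0) \triangle C_{k,n}^{\text{ext}}(\Phi)$ to lie inside the almost-surely finite set $\{\0\} \cup \bigcup_{i,j} C_{i,j}^{\text{ext}}$. Moreover, the count of IEC memberships of a vertex is monotone nondecreasing when moving from $\Phi$ to $\Phi^\0$, so $C_{k,n}^{\text{ext}}(\Phi) \subseteq C_{k,n}^{\text{ext}}(\Phi^\0)$, and together this gives the sandwich
$$\frac{|C_{k,n}^{\text{ext}}(\Phi) \cap \G_m|}{m^2} \;\le\; \frac{|C_{k,n}^{\text{ext}}(\Phi^\0) \cap \G_m|}{m^2} \;\le\; \frac{|C_{k,n}^{\text{ext}}(\Phi) \cap \G_m|}{m^2} + \frac{1 + \sum_{i,j} |C_{i,j}^{\text{ext}}|}{m^2}.$$
The rightmost correction vanishes almost surely as $m \to \infty$ because each $C_{i,j}$ is finite and there are at most $7n$ such clusters, so both ratios have the same almost-sure limit.

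Finally, to pass from the almost-sure limit to the equality of expectations, I would invoke DCT exactly as in Propositions \ref{prop:Palm_inf} and \ref{prop:Palm_ext}, using that the fractions are bounded above by $\Phi^\0(\G_m)/m^2$, which is uniformly integrable under both $\P$ and $\P^\0$ (its $L^2$-norm is controlled via the Poisson second-moment formula). The main obstacle I anticipate is the multi-coordinate bookkeeping: one must carefully verify that a change in the $k$-out-of-$n$ indicator for some node forces a change in at least one per-packet IEC indicator, which in turn pins the node to one of the finite add-on clusters $C_{i,j}$ (or to the origin itself). Beyond that, the proof is a mechanical extension of the single-packet argument of Proposition \ref{prop:Palm_ext}.
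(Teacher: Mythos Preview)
Your proposal is correct and follows essentially the same route as the paper's proof: establish the per-packet inclusion $C_i^{\text{ext}}(\Phi) \subseteq C_i^{\text{ext}}(\Phi^\0) \subseteq C_i^{\text{ext}}(\Phi) \cup \bigcup_{j \le K_i} C_{i,j}^{\text{ext}}$ with $K_i \le 7$, deduce the sandwich for $C_{k,n}^{\text{ext}}$ with a correction of at most $7n$ finite extended clusters, and conclude via DCT. Your treatment is slightly more careful than the paper's in two places---you explicitly track the origin point in the correction and you justify the DCT via uniform integrability of $\Phi^\0(\G_m)/m^2$ rather than the paper's (loose) claim that the ratios are bounded by $1$---but the argument is otherwise identical.
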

			\begin{proof}
				Firstly, note that 
				\begin{equation}
				\frac{|C_{k,n}^{\text{ext}}(\Phi^\0) \cap \G_m|}{m^2} \geq \frac{|C_{k,n}^{\text{ext}}(\Phi) \cap \G_m|}{m^2}.
				\label{eq:mult_ext_lb}
				\end{equation}
				The nodes in $C_{k,n}^{\text{ext}}(\Phi^\0)$ can be related to those in $C_{k,n}^{\text{ext}}(\Phi)$ in the following way. Let $C_{1}^+,C_{2}^+,\cdots,C_{n}^+$ denote the infinite clusters corresponding to each of the $n$ packets and let $C_{i,1},C_{i,2},\cdots,C_{i,K_i}$ denote the finite clusters corresponding to the $i-$th packet which intersect the ball of radius $1$ at the origin. Here again, $K_i \leq 7 \ a.s. $ for all $ i$. Proceeding with similar reasoning as that of Proposition \ref{prop:Palm_ext}, we can obtain
				
				\begin{align}
				\frac{|C _{k,n}^{\text{ext}}(\Phi^\0) \cap \G_m|}{m^2} &\leq \frac{|C _{k,n}^{\text{ext}}(\Phi) \cap \G_m|}{m^2} \nonumber\\
				&\  \hspace{0.2cm}+\ \sum_{\substack{i \in [n] \\ C_{i}^+ \cap \Ball \neq \emptyset}} \sum_{j=1}^{K_i} \frac{|C_{i,j}^{\text{ext}} \cap \G_m|}{m^2}
				\label{eq:mult_ext_ub}
				\end{align}
				The summation on the RHS is a finite sum with at most $7n$ terms with each term consisting of fraction of nodes in some finite cluster. By taking limits as $m\rightarrow \infty$, this fraction vanishes. Therefore
				the fraction $\frac{|C_{k,n}^{\text{ext}}(\Phi^\0) \cap \G_m|}{m^2} $ is sandwiched between the two limits in \eqref{eq:mult_ext_lb} and \eqref{eq:mult_ext_ub} yielding
				\begin{equation*}
				\lim_{m\rightarrow \infty} \frac{|C_{k,n}^{\text{ext}}(\Phi^\0) \cap \G_m|}{m^2} = 	\lim_{m\rightarrow \infty} \frac{|C_{k,n}^{\text{ext}}(\Phi) \cap \G_m|}{m^2} \ \ \ \ \ \ \P\text{-a.s.}
				\end{equation*}
				Using DCT gives the statement of the proposition.
			\end{proof}
			
			\section{Russo-Seymour-Welsh (RSW) theory}
			\label{app:RSW}
			Let $\Phi$ be a homogeneous Poisson point process of intensity $\lambda >\lambda_c$ on the whole $\R^2$ plane. On a box $B_{a,b}=[0,b]\times[0,a]$, a \emph{left-right crossing} of $B_{a,b}$ is defined as a sequence of vertices $\{X_i, i = 1,2,\cdots,s\}$, such that\footnote{Here $||\cdot||$ is the $L^2$ norm.} $||X_i-X_{i-1}|| \le 1$ for $i=2,3,\cdots,s$ and $||X_1-x|| \le 1$ and $||X_s-y|| \le 1$ for some $x\in \{0\} \times[0,a]$ and $y \in \{b\} \times [0,a]$. A \emph{top-bottom crossing} is defined similarly but with $x\in[0,b]\times \{0\}$ and $y\in[0,b]\times \{a\}$. If $a=b$, we simply denote the square box by $B_a$.
			
			Define $LR(a)$ to be the event that there is a left right crossing in a rectangular box $R_a=B_{a,2a}=[0,2a]\times[0,a]$. The  probability of $LR(a)$ in the super-critical region is exponentially close to $1$ as formalized in \cite[Lemma 10.5]{penrose2003random}. We reproduce the same here.
			\begin{lem}
				For $\lambda > \lambda_c$, there exists $c>0$ and $a_1>0$ such that $1-\P(\text{LR}(a)) \le \exp(-ca)$ for all $a\ge a_1$.
				\label{lem:left_right}
			\end{lem}
		We will use this lemma to obtain the probability of a left-right crossing in a $s_m \times m$ rectangular box, where $s_m \ll m$. Let $CR$ be the event that there is a left-right crossing of the box $B=[0,m]\times [0,s_m]$. We then have the following proposition.
		
		\begin{prop}
			For $\lambda>\lambda_c$, there exists $c>0$ and $a_1>0$ such that $1-\P(CR) \le 2 \left\lceil \frac{m}{s_m}\right\rceil \exp(-cs_m)$ for all $s_m\ge a_1$.
			\label{lem:crossing}
		\end{prop}
		\begin{proof}
			Denote $\ell = \left\lceil \frac{m}{s_m}\right\rceil$. Let $B_i = [(i-1)s_m,is_m]\times [0,s_m]$ for $i\in \{1,2,\cdots,\ell\}$ and let $R_j = B_j \cup B_{j+1}$ for $j\in \{1,2,\cdots,\ell-1\}$ (see Fig. \ref{fig:cross}). Define $LR_j$ to be the event that there is a left-right crossing in $R_j$ and let $TB_i$ be the event that there is a top-bottom crossing of $B_i$. Notice that
			%One way to obtain a crossing of $B$ is when all the  $TB_i$  and $LR_j$ events occur as in Fig. \ref{fig:cross}, i.e.,
			\begin{figure}
				\centering
				\includegraphics[width=0.5\textwidth]{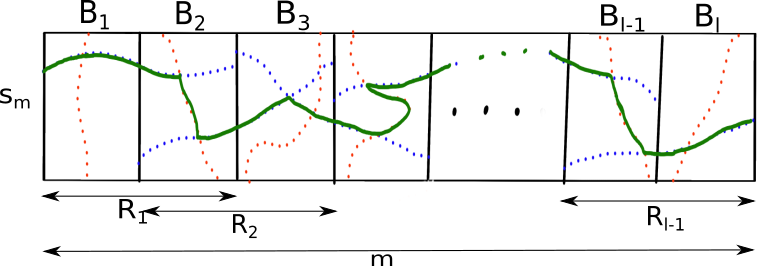}
				\caption{Left-right crossing in the $s_m \times m$ rectangular box $B$ through the events $LR_j$ and $TB_i$.}
				\label{fig:cross}
			\end{figure}%
			$$ CR \supseteq \bigcap_{i=1}^\ell TB_i \cap \bigcap_{j=1}^{\ell-1} LR_j,$$
			which gives 
			\begin{equation*}
			\P(CR^\mathsf{c}) 
			%&\le \P\left(\bigcup_{i=1}^\ell S_i^\mathsf{c}\  \cup\  \bigcup_{j=1}^{\ell-1} LR_j^\mathsf{c}\right)\\
			\le \sum_{i=1}^\ell \P(TB_i^\mathsf{c})\  +\  \sum_{j=1}^{\ell-1} \P(LR_j^\mathsf{c}).
			\end{equation*}
		The probability of there being no left-right crossings in the rectangles $R_j$, for $j \in \{1,2,\cdots,\ell-1 \}$, are identical (due to translation invariance) and hence the latter term in the above expression can be replaced by $(\ell-1)\P(LR_1^\mathsf{c})$. For the first term, note that absence of a top-bottom crossing of $B_i$ implies that there is no top-bottom crossing in the rectangle $R_i'=[(i-1)s_m,is_m]\times [0,2s_m]$. But a top-bottom crossing in $R_i'$ is the same as a left-right crossing in $R_1$ (say), since the underlying homogeneous Poisson point process $\Phi$ is isotropic. This gives 
			\begin{equation*}
			\P(CR^\mathsf{c}) \le (2\ell-1)P(LR_1^\mathsf{c}),
			\end{equation*}
			which from Lemma \ref{lem:left_right} gives the statement of the proposition.
		\end{proof}
		\begin{figure}
			\centering
			\includegraphics[width=0.4\textwidth]{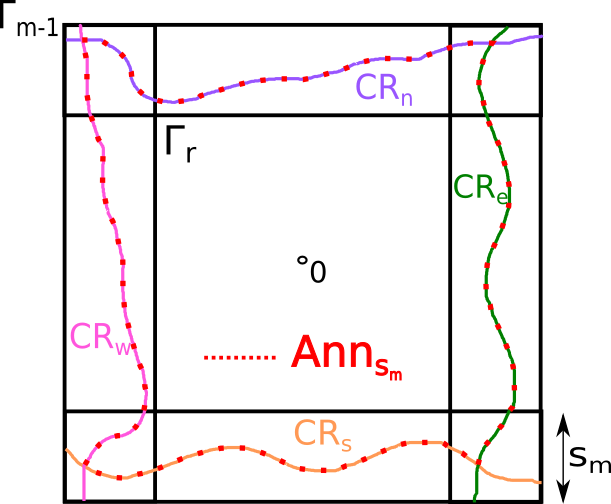}
			\caption{Circuit formed by the four left-right crossings $LR_d, d\in \{n,s,e,w \}$}
			\label{fig:annulus}
		\end{figure}%
			Next, we apply Proposition \ref{lem:crossing} to the four rectangles surrounding $\G_r$ as depicted in Fig. \ref{fig:annulus} . Let $CR_d$ for $d \in \{n,s,e,w \}$ be the event denoting the existence of crossings inside the four rectangles and let $\text{Ann}_{s_m}$ be the event that there is a circuit in the annulus $\G_{m-1} \setminus \G_r$ as shown in Fig. \ref{fig:annulus}. Since the presence of crossings in the four rectangles ensures the occurence of $\text{Ann}_{s_m}$, we obtain
			\begin{align*}
			\P(\text{Ann}_{s_m}^\mathsf{c}) & \le \P\left(\bigcup_d CR_d^\mathsf{c}\right),\\
			&\le \sum_d\P(CR_d^\mathsf{c}),\\
			&\le 8\left\lceil \frac{m}{s_m}\right\rceil \exp(-cs_m).
			\end{align*}
			We state this formally in the following proposition.
			
			\begin{prop}
				For $\lambda > \lambda_c$, there exists $c>0$ and $a_1>0$ such that $1-\P(\text{Ann}_{s_m}) \le 8\left\lceil \frac{m}{s_m}\right\rceil \exp(-cs_m)$ for all $s_m\ge a_1$.
				\label{prop:annulus}
			\end{prop}	
			
			\textbf{Remark:} Note that the statement of the above propostition holds even with respect to the Palm probability $\Palm$. This is because introducing a point at the origin does not affect the event $\text{Ann}_{s_m}$, and hence $\Palm(\text{Ann}_{s_m})=\P(\text{Ann}_{s_m})$.

			\section{Proof of Theorem \ref{thm:recs}}
			\label{app:proof_recs}
			\begin{thm}[Restatement of Theorem \ref{thm:recs}]
				For $\lambda p >\lambda_c$, we have \
				\begin{equation*}
				\lim_{m\rightarrow \infty} \MarkPalm_A  \left[ \frac{R_{k,n}(G_m)}{|C_\0(G_m)|} \right] = 
				\ \frac{1}{\theta(\lambda)^2}\sum_{t=k}^n \binom{n}{t} \theta_{k,t}^{\text{ext}}\  \ProbMark (A_{[t]}^{\text{ext}}).
				\end{equation*}
			\end{thm}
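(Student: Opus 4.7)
The plan is to partition the Palm probability space according to which of the $n$ infinite extended clusters (IECs) contain the origin. Given $\bZ(\0)=\1$, the events $\{A_T^{\text{ext}} : T\subseteq [n]\}$, where $A_T^{\text{ext}} = \{i \in T \iff \0 \in C_i^{\text{ext}}\}$, form a partition under $\ProbMark$. The intuition is that only packets in $T$ propagate to an asymptotically nontrivial fraction of nodes: for $i \notin T$, the origin sits in a finite cluster of $\Phi_i^+$, so packet $i$ is received by only $O(1)$ nodes, a contribution that vanishes after dividing by $\Phi(\G_m)$.

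First, I would use Assumption \ref{assump:in_inf} together with the boundary-effect arguments already invoked for Theorem \ref{thm:trans} (and developed in detail in \cite{ToN}) to replace the receivers on the finite graph $G_m$ by the restriction to $\G_m$ of the infinite-plane object:
\[
\lim_{m\to\infty} \MarkPalm\!\left[\frac{R_{k,n}(G_m)}{\Phi(\G_m)}\right] = \lim_{m\to\infty} \MarkPalm\!\left[\frac{|C_{k,n}^{\text{ext}} \cap \G_m|}{\Phi(\G_m)}\right].
\]
The input here is an annulus-crossing estimate in super-critical continuum percolation, guaranteeing that packets received via long detours leaving and re-entering $\G_m$ contribute negligibly.

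Next, I would fix $T \subseteq [n]$ with $|T|=t$ and, on the event $A_T^{\text{ext}}$, identify (up to a bounded number of exceptional nodes near the origin) the set $C_{k,n}^{\text{ext}} \cap \G_m$ with the set of nodes in $\G_m$ belonging to at least $k$ of the $t$ IECs indexed by $T$. Since the marks $(Z_i)_{i\in T}$ are themselves iid Ber$(p)$, the fraction of nodes in $\G_m$ belonging to at least $k$ of these $t$ IECs has the same distribution as the quantity whose limit is $\theta_{k,t}^{\text{ext}}$ in Section \ref{sec:MPP_mult}. The ergodic theorem \eqref{thm:ergodic} together with Proposition \ref{prop:Palm_ext_mult} (to move between $\P$ and the Palm expectation) yields
\[
\lim_{m\to\infty} \MarkPalm\!\left[\frac{|C_{k,n}^{\text{ext}} \cap \G_m|}{\Phi(\G_m)} \,\mathds{1}\{A_T^{\text{ext}}\}\right] = \theta_{k,t}^{\text{ext}}\, \ProbMark(A_T^{\text{ext}}),
\]
and the right-hand side depends on $T$ only through $t=|T|$ by exchangeability of the mark coordinates.

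Finally, I would split the expectation over the partition, pass to the limit using dominated convergence, and group $T$ by cardinality, picking up a factor $\binom{n}{t}$; the terms with $t < k$ vanish because no node can belong to at least $k$ out of $t$ IECs when $t < k$, so $\theta_{k,t}^{\text{ext}} = 0$. This produces the claimed sum from $t=k$ to $n$. The main obstacle is the first step: making the boundary-effect argument rigorous in the marked-point-process setting, which requires super-critical percolation estimates on annulus crossings to ensure that receivers on $G_m$ obtained via paths exiting and re-entering $\G_m$ form a negligible fraction in the limit.
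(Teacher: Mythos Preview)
Your overall strategy---partition by $A_T^{\text{ext}}$, use the ergodic theorem on each piece, then regroup by cardinality---is exactly the paper's approach. However, there is a genuine error in the intermediate object you work with.

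Your first displayed equation replaces $R_{k,n}(G_m)$ by $|C_{k,n}^{\text{ext}}\cap\G_m|$. This is wrong: $C_{k,n}^{\text{ext}}$ is the set of nodes lying in at least $k$ of the $n$ IECs, \emph{regardless of whether the origin belongs to those IECs}. By contrast, $R_{k,n}(G_m)$ counts nodes that receive at least $k$ packets \emph{from the origin}. If the origin is not in the IEC for packet $j$, then packet $j$ reaches only finitely many nodes, yet $C_j^{\text{ext}}$ still contains $\Theta(m^2)$ nodes in $\G_m$. So the two quantities differ by a nonvanishing fraction unless the origin happens to lie in all $n$ IECs. Your displayed equality would force the limit to equal $\theta_{k,n}^{\text{ext}}$, which is not the theorem's right-hand side.

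The same confusion recurs in your second step: on $A_T^{\text{ext}}$ you claim that $C_{k,n}^{\text{ext}}$ reduces (up to $O(1)$ nodes) to the nodes lying in at least $k$ of the $t$ IECs indexed by $T$. But $A_T^{\text{ext}}$ only constrains which IECs contain \emph{the origin}; the IECs for $j\notin T$ are still present and still large, so $C_{k,n}^{\text{ext}}$ is unchanged. The correct intermediate object is $\cR_{k,n}(\cG^\0)$, the set of nodes receiving at least $k$ packets from the origin when forwarding is run on the whole plane (the paper's Step~2). On $A_T^{\text{ext}}$, packets $j\notin T$ reach only the finite extended cluster of the origin in $\Phi_j^+$, so $\cR_{k,n}(\cG^\0)$ does reduce, up to a vanishing fraction, to the nodes in at least $k$ of the IECs indexed by $T$---this is the paper's $\cR_{k,T}$ (Steps~3--5). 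Once you make that substitution, the rest of your argument goes through and matches the paper's Steps~6--8.
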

			\begin{proof}[Sketch of proof:]
				\textbf{Step 1:} We first evaluate $$\lim_{m\rightarrow \infty}\MarkPalm  \left[ \frac{R_{k,n}(G_m)}{|C_\0(G_m)|} \1_A \right] = \lim_{m\rightarrow \infty}\MarkPalm  \left[ \frac{R_{k,n}(G_m) \1_A}{|C_\0(G_m)| \1_A}  \right]$$ and then divide it by $\P(A) = \P(\0 \in C(\cG^0)) = \theta(\lambda)$ to obtain the required conditional expectation.  We take the convention that $\frac{0}{0}=0$. Note that Assumption \ref{assump:super_crit} ensures that $\theta(\lambda)>0$.
				
				\textbf{Step 2:} Specializing the statement of Theorem \ref{thm:rgg_out_recs} on the event $A$, we obtain
					\begin{equation*}
					\lim\limits_{m\rightarrow \infty} \frac{|C_\0(G^\0_m)|}{\lambda m^2} \1_A = \lim\limits_{m\rightarrow \infty} \frac{|C_\0(\cG^\0) \cap \G_m |}{\lambda m^2} \1_A \hspace{1cm}  \P\text{-a.s.}.
					\end{equation*}
					Notice that on the event $A$, $C_\0(\cG^\0) =C(\cG^\0)$. Using  \eqref{eq:palm_to_normal}, \eqref{Eq:erg_cluster} and the note following Lemma \ref{lem:comp_org}, we have for $\lambda > \lambda_c$
					\begin{align*}
					\lim\limits_{m\rightarrow \infty} \frac{|C_\0(G_m)|}{\lambda m^2} \1_A &= \lim\limits_{m\rightarrow \infty} \frac{|C(\cG) \cap \G_m |}{\lambda m^2} \1_A \\
					&= \theta(\lambda) \1_A \hspace{1.5cm}  \P^\0\text{-a.s.}.
					\end{align*}
					Conditional on the mark of the origin $\bZ(\0)=\1$, we have 
					$$\lim\limits_{m\rightarrow \infty} \frac{|C_\0(G_m)|}{\lambda m^2} \1_A  = \theta(\lambda) \1_A  \hspace{1.5cm}  \ProbMark \text{-a.s.} $$

				\textbf{Step 3:} Let $\cR_{k,n}(\cG) $ be the set of nodes that receive at least $k$ out of the $n$ packets from the origin when probabilistic forwarding is carried out on $\cG$. Using arguments similar to those of Theorem \ref{thm:rgg_out_recs} for nodes without $\G_m$-conduits, we have that
				\begin{align}
				\lim_{m\rightarrow \infty} \MarkPalm &\left[ \frac{R_{k,n}(G_m)}{\lambda m^2} \1_A \right] = \nonumber\\
				& \hspace{1cm} \lim_{m\rightarrow \infty} \MarkPalm  \left[ \frac{|\cR_{k,n}(\cG)\cap \G_m|}{\lambda m^2} \1_A \right].
				\label{eq:step3}
				\end{align}
				%The nodes within $\G_m$ which are successful receivers on $\cG^\0$ but are not successful receivers in $G^\0$ are those which must receive at least one packet through a path that necessarily goes outside $\G_m$. This then precludes the existence of a connected path of vertices with mark $1$ in an annulus around $\G_m$. But, in the limit as $m\rightarrow \infty$, the probability of such a path tends to $1$ in the super-critical regime, as shown in \cite[Section 3.3]{franceschetti2008random}. Thus, the fraction of such nodes diminishes in the limit of large $m$. It then suffices to evaluate $\MarkPalm  \left[ \frac{|\cR_{k,n}(\R^2)\cap \G_m|}{\lambda m^2} \right] $.
				
				\textbf{Step 4:} For $T \subseteq [n]$, let $A_T^{\text{ext}}$ be the event that the origin is present in exactly the IECs indexed by $T$. Conditioning on the event $A_T^{\text{ext}}$, we obtain
				\begin{align}
				&\MarkPalm  \left[ \frac{|\cR_{k,n}(\cG)\cap \G_m|}{\lambda m^2} \1_A \right] = \nonumber \\
				&\hspace{0.5cm} \sum_{t=0}^n \sum_{\substack{T\subseteq [n] \\ |T|=t}} \MarkPalm  \left[ \frac{|\cR_{k,n}(\cG)\cap \G_m|}{\lambda m^2} \1_A \Bigg| A_T^{\text{ext}} \right] \ProbMark (A_T^{\text{ext}}).
				\label{eq:conditioning}
				\end{align}
				If $|T| < k$, then the nodes of $\cR_{k,n}(\cG) $ within $\G_m$ must reside in finite clusters whose fraction vanishes in the limit of large $m$. If $|T|\geq k$, then it is only the nodes which are within at least $k$ IECs among those packet transmissions which are indexed by $T$, that contribute towards the expectation. Denote such nodes by $\cR_{k,T}$. The remaining nodes of $\cR_{k,n}(\cG) $ within $\G_m$, must be in at least one finite cluster and hence their fraction vanishes in the limit. Additionally, given $A_T^{\text{ext}}$ for $|T|>0$, the $\0$ must be present in the infinite cluster of the underlying graph i.e., $\1_A=1$. Putting all these together, we obtain
				\begin{align}
				\lim_{m\rightarrow \infty} & \MarkPalm  \left[ \frac{|\cR_{k,n}(\cG)\cap \G_m|}{\lambda m^2} \1_A\right] = \nonumber \\
				&\lim_{m\rightarrow \infty} \sum_{t=k}^n\sum_{\substack{T\subseteq [n] \\ |T|= t}} \MarkPalm  \left[ \frac{|\cR_{k,T}\cap \G_m|}{\lambda m^2} \Bigg{|} A_T^{\text{ext}} \right] \ProbMark (A_T^{\text{ext}}).
				\label{eq:sum_expr}
				\end{align}
				
				\textbf{Step 5:} Define $\mathbf{O}$ to be the event that the origin has mark $1$ in all the $n$ packet transmissions. The expectation on the RHS in the above equation can be written as 
				\begin{align*}
				\MarkPalm  \left[ \frac{|\cR_{k,T}\cap \G_m|}{\lambda m^2} \Bigg{|} A_T^{\text{ext}} \right]& =\\
				& \Palmexp \left[ \frac{|\cR_{k,T}\cap \G_m|}{\lambda m^2} \Bigg{|} A_T^{\text{ext}} \cap \mathbf{O} \right].
				\end{align*}
				$\cR_{k,T}$ is independent of the packet transmissions which are not in $T$. The event $\mathbf{O}$ can be thus restricted to only those indices in $T$. However, the conditioning event $A_T^{\text{ext}} \cap \mathbf{O}$ is then the event that $\0$ is in the infinite cluster $C^+$ in the packet transmissions indexed by $T$. Call this event $A_T^+$. We then have
				\begin{equation}
				\MarkPalm  \left[ \frac{|\cR_{k,T}\cap \G_m|}{\lambda m^2} \Bigg{|} A_T^{\text{ext}} \right] = \Palmexp \left[ \frac{|\cR_{k,T}\cap \G_m|}{\lambda m^2} \Bigg{|} A_T^+ \right]
				\label{eq:exttoplus}
				\end{equation} 
				
				\textbf{Step 6:}
				Conditional on the event $A_T^+$, the set $\cR_{k,T}$ has the same distribution as the set $C_{k,|T|}^{\text{ext}}$,  which was defined in Section \ref{sec:MPP_mult}. This gives
				\begin{equation*}
				\Palmexp \left[ \frac{|\cR_{k,T}\cap \G_m|}{\lambda m^2} \Bigg{|} A_T^+ \right] =  \Palmexp \left[ \frac{| C_{k,|T|}^{\text{ext}}\cap \G_m|}{\lambda m^2}\right].
				\end{equation*}
				From Proposition \ref{prop:Palm_ext_mult}, by taking limits as $m\rightarrow \infty$, the expectation with respect to the Palm probability, $\E^\0$, can be written in terms of the expectation $\E$, yielding
				\begin{equation}
				\lim_{m \rightarrow \infty} \Palmexp \left[ \frac{|\cR_{k,T}\cap \G_m|}{\lambda m^2}  \Bigg{|} A_T^+ \right] = \lim_{m \rightarrow \infty} \E \left[ \frac{| C_{k,|T|}^{\text{ext}}\cap \G_m|}{\lambda m^2}  \right]
				\label{eq:PalmtoE}
				\end{equation}
				
				\textbf{Step 7:} Using \eqref{succ_recs} with $n$ replaced by $|T|=t$ and employing DCT, we obtain 
				\begin{equation}
				\lim_{m \rightarrow \infty} \E \left[ \frac{| C_{k,|T|}^{\text{ext}}\cap \G_m|}{\lambda m^2} \right] = \theta_{k,t}^{\text{ext}}(\lambda,p) 
				\label{eq:theta}
				\end{equation}
				
				\textbf{Step 8:} Clubbing the expressions from \eqref{eq:exttoplus}, \eqref{eq:PalmtoE} and \eqref{eq:theta} into \eqref{eq:sum_expr}, and using \eqref{eq:step3}, we obtain
				\begin{equation*}
				\lim_{m\rightarrow \infty} \MarkPalm \left[ \frac{R_{k,n}(G_m)}{\lambda m^2}\1_A \right] = \sum_{t=k}^n \sum_{\substack{T\subseteq [n] \\ |T|=t}} \ \theta_{k,t}^{\text{ext}}\  \ProbMark (A_T^{\text{ext}}).
				\end{equation*}
				
				\textbf{Step 9:} The event $A_T^{\text{ext}}$ can be expressed as $$A_T^{\text{ext}} = \bigcap_{i \in T} \{\0 \in C_i^{\text{ext}}\} \bigcap_{j \notin T} \{\0 \notin C_j^{\text{ext}} \}. $$ Here,  $C_1^{\text{ext}}, C_2^{\text{ext}},\cdots, C_n^{\text{ext}}$ denote the IECs corresponding to the $n$ packet transmissions. Since $\{ \0 \in C_i^{\text{ext}} \} = \{\Ball \cap C_i^+ \neq \emptyset \}$, the event $A_T^{\text{ext}}$ does not depend on the specific mark of $\0$. Furthermore, the event $A_T^{\text{ext}}$ does not depend on the specific choice of the set $T$, but just on the cardinality $|T|$. This is because a relabeling of the packets does not alter the probability of $A_T^{\text{ext}}$. For a particular value of $|T| = t$, define $$A_{[t]}^{\text{ext}} = \bigcap_{i=1}^t \{\0 \in C_i^{\text{ext}}\} \bigcap_{j=t+1}^n \{\0 \notin C_j^{\text{ext}} \}.$$  Notice now that the terms within the summation in Step 7, $\theta_{k,t}^{\text{ext}}\  \ProbMark (A_{T}^{\text{ext}})$ are identical for different $T$ with the same cardinality. Therefore,
				\begin{equation*}
				\lim_{m\rightarrow \infty} \MarkPalm \left[ \frac{R_{k,n}(G_m)}{\lambda m^2}\1_A \right] = \sum_{t=k}^n \binom{n}{t} \ \theta_{k,t}^{\text{ext}}\  \ProbMark (A_T^{\text{ext}}).
				\end{equation*}
				
				\textbf{Step 10:} Putting together the results from Step 2 and Step 9 and dividing by $\theta(\lambda)$ gives the statement of the theorem.
			\end{proof}

		\end{document}